\definecolor{red}{HTML}{f94144}
\definecolor{orange1}{HTML}{f3722c}
\definecolor{orange2}{HTML}{f8961e}
\definecolor{yellow}{HTML}{f9c74f}
\definecolor{green1}{HTML}{90be6d}
\definecolor{green2}{HTML}{43aa8b}
\definecolor{blue1}{HTML}{577590}
\definecolor{blue2}{HTML}{197db4}
\colorlet{comment}{blue}
\begin{document}
\title{Visualizing Trace Variants From Partially Ordered Event Data}
%
%\titlerunning{Abbreviated paper title}
% If the paper title is too long for the running head, you can set
% an abbreviated paper title here
%
\author{Daniel Schuster\inst{1,2} \orcidID{0000-0002-6512-9580} \and
Lukas Schade\inst{2} \and
Sebastiaan J. van Zelst\inst{1,2} \orcidID{0000-0003-0415-1036} \and
Wil M. P. van der Aalst\inst{1,2} \orcidID{0000-0002-0955-6940} }

\authorrunning{D. Schuster et al.}
% First names are abbreviated in the running head.
% If there are more than two authors, 'et al.' is used.
%
\institute{Fraunhofer Institute for Applied Information Technology FIT, Germany\\
\email{\{daniel.schuster,sebastiaan.van.zelst\}@fit.fraunhofer.de}
\and
RWTH Aachen University, Aachen, Germany\\
\email{wvdaalst@pads.rwth-aachen.de}}

\maketitle              % typeset the header of the contribution
\begin{abstract}
Executing operational processes generates event data, which contain information on the executed process activities. % and are stored within organizations' information systems. 
Process mining techniques allow to systematically analyze event data to gain insights that are then used to optimize processes.
Visual analytics for event data are essential for the application of process mining. %, e.g., to explore the event data, to find patterns, and to decide on further process mining techniques to be applied.
Visualizing unique process executions---also called trace variants, i.e., unique sequences of executed process activities---is a common technique implemented in many scientific and industrial process mining applications. 
Most existing visualizations assume a total order on the executed process activities, i.e., these techniques assume that process activities are atomic and were executed at a specific point in time.
In reality, however, the executions of activities are \emph{not} atomic. 
Multiple timestamps are recorded for an executed process activity, e.g., a start-timestamp and a complete-timestamp.
Therefore, the execution of process activities may overlap and, thus, cannot be represented as a total order if more than one timestamp is to be considered.
In this paper, we present a visualization approach for trace variants that incorporates start- and complete-timestamps of activities.
%The proposed visualization approach has been implemented in Cortado, an interactive process modeling and discovery tool. 

\keywords{Process Mining  \and Visual analytics \and Interval order.}
\end{abstract}

\section{Introduction}

The execution of operational processes, e.g., business and production processes, is often supported by information systems that record process executions in detail.
We refer to such recorded information as \emph{event data}.
The analysis of event data is of great importance for organizations to improve their processes.
\emph{Process mining}~\cite{vanderAalst2016} offers various techniques for systematically analyzing event data, e.g., to learn a process model, to check compliance, and to obtain performance measures.
These insights into the processes can then be used to optimize them.

\iffalse
\begin{figure}[tb]
    \centering
    \includegraphics[width=.9\textwidth]{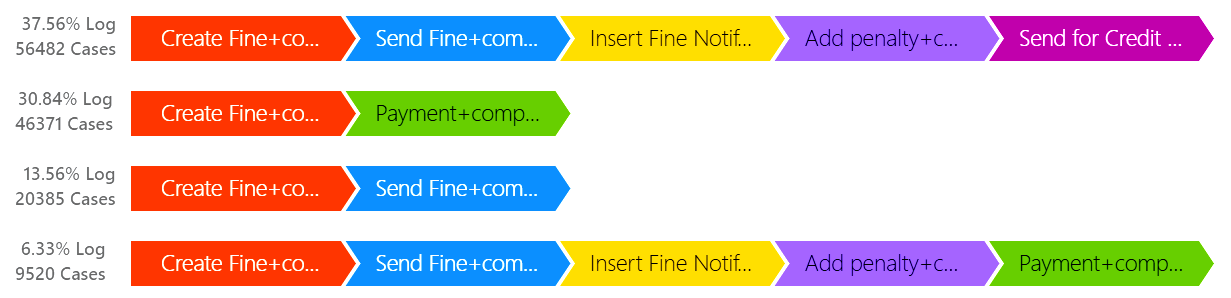}
    \caption{Example of a classical trace variant visualization, i.e., a variant explorer. For example, the first row shows that in the given event log 56,482 process executions followed the pattern that first the activity \enquote{Create Fine} was executed, followed by \enquote{Send Fine}, \enquote{Insert Fine Notification}, \enquote{Add penalty}, and \enquote{Send for Credit Collection}}
    \label{fig:variant-explorer-classic}
\end{figure}
\fi

\begin{figure}
    \centering
    \begin{subfigure}[b]{\textwidth}
        \centering
        \begin{tikzpicture}
        \scriptsize
          \tikzset{
            arrow/.style={
              draw=lightgray,
              minimum height=.5cm,
              inner sep=.2em,
              shape=signal,
              signal from=west,
              signal to=east,
              signal pointer angle=150,
            }
          }
          %\node[arrow] {first};
          \node[arrow,align=center,fill=red,text=white] (1) {activity $A$};
          \node[arrow,align=center,fill=orange, right=.1cm of 1] (2) {activity $B$};
          \node[arrow,align=center,fill=green1, right=.1cm of 2] (3) {activity $C$};
          \node[arrow,align=center,fill=blue1,text=white, right=.1cm of 3] (4) {activity $D$};
          \node[arrow,align=center,fill=yellow, right=.1cm of 4] (5) {activity $E$};
          \node[arrow,align=center,fill=green2, right=.1cm of 5] (6) {activity $F$};
          \node[arrow,align=center,fill=red,text=white, right=.1cm of 6] (7) {activity $A$};
          \node[arrow,align=center,fill=blue2,text=white, right=.1cm of 7] (8) {activity $G$};
          
          \node[arrow,align=center,fill=red,text=white, below=.2cm of 1] (11) {activity $A$};
          \node[arrow,align=center,fill=orange, right=.1cm of 11] (21) {activity $B$};
          \node[arrow,align=center,fill=green1, right=.1cm of 21] (31) {activity $C$};
          \node[arrow,align=center,fill=yellow, right=.1cm of 31] (41) {activity $E$};
          \node[arrow,align=center,fill=blue1,text=white, right=.1cm of 41] (51) {activity $D$};
          \node[arrow,align=center,fill=red,text=white, right=.1cm of 51] (61) {activity $A$};
          \node[arrow,align=center,fill=green2, right=.1cm of 61] (71) {activity $F$};
          \node[arrow,align=center,fill=blue2,text=white, right=.1cm of 71] (81) {activity $G$};
          
            \node[arrow,align=center,fill=red,text=white, below=.2cm of 11] (12) {activity $A$};
          \node[arrow,align=center,fill=orange, right=.1cm of 12] (22) {activity $B$};
          \node[arrow,align=center,fill=green1, right=.1cm of 22] (32) {activity $C$};
          \node[arrow,align=center,fill=yellow, right=.1cm of 32] (42) {activity $E$};
          \node[arrow,align=center,fill=blue1,text=white, right=.1cm of 42] (52) {activity $D$};
          %\node[arrow,align=center,fill=black!5, right=.1cm of 52] (62) {activity $A$};
          %\node[arrow,align=center,fill=black!5, right=.1cm of 62] (72) {activity $F$};
          \node[arrow,align=center,fill=blue2,text=white, right=.1cm of 52] (82) {activity $G$};
          
        \end{tikzpicture}
        \caption{Three different trace variants showing the execution order of \emph{atomic} activities}
        \label{fig:classic_variant_explorer_a}
    \end{subfigure}
    \hfill
\begin{subfigure}[b]{\textwidth}
        \centering
        \begin{tikzpicture}
        \scriptsize
          \tikzset{
            arrow/.style={
              draw=lightgray,
              minimum height=.5cm,
              inner sep=.2em,
              shape=signal,
              signal from=west,
              signal to=east,
              signal pointer angle=150,
            }
          }
          %\node[arrow] {first};
          \node[arrow,align=center,fill=red,text=white] (1) {activity $A$\\(start)};
          \node[arrow,align=center,fill=orange, right=.1cm of 1] (2) {activity $B$\\(start)};
          \node[arrow,align=center,fill=green1, right=.1cm of 2] (3) {activity $C$\\(start)};
          \node[arrow,align=center,fill=red,text=white, right=.1cm of 3] (4) {activity $A$\\(complete)};
          \node[arrow,align=center,fill=orange, right=.1cm of 4] (5) {activity $B$\\(complete)};
          \node[arrow,align=center,fill=blue1,text=white, right=.1cm of 5] (6) {activity $D$\\(start)};
          \node[arrow,align=center,fill=yellow, right=.1cm of 6] (7) {activity $E$\\(start)};
          \node[align=center,right=.1cm of 7] (8) {$\dots$};
          
          \node[arrow,align=center,fill=red,text=white, below=.2cm of 1] (11) {activity $A$\\(start)};
          \node[arrow,align=center,fill=orange, right=.1cm of 11] (21) {activity $B$\\(start)};
          \node[arrow,align=center,fill=green1, right=.1cm of 21] (31) {activity $C$\\(start)};
          \node[arrow,align=center,fill=red,text=white, right=.1cm of 31] (41) {activity $A$\\(complete)};
          \node[arrow,align=center,fill=orange, right=.1cm of 41] (51) {activity $B$\\(complete)};
          \node[arrow,align=center,fill=yellow, right=.1cm of 51] (61) {activity $E$\\(start)};
          \node[arrow,align=center,fill=blue1,text=white, right=.1cm of 61] (71) {activity $D$\\(start)};
          \node[align=center,right=.1cm of 71] (81) {$\dots$};
          
          \iffalse
            \node[arrow,align=center,fill=red,text=white, below=.2cm of 11] (12) {activity $A$};
          \node[arrow,align=center,fill=orange, right=.1cm of 12] (22) {activity $B$};
          \node[arrow,align=center,fill=green1, right=.1cm of 22] (32) {activity $C$};
          \node[arrow,align=center,fill=yellow, right=.1cm of 32] (42) {activity $E$};
          \node[arrow,align=center,fill=blue1,text=white, right=.1cm of 42] (52) {activity $D$};
          %\node[arrow,align=center,fill=black!5, right=.1cm of 52] (62) {activity $A$};
          %\node[arrow,align=center,fill=black!5, right=.1cm of 62] (72) {activity $F$};
          \node[arrow,align=center,fill=blue2,text=white, right=.1cm of 52] (82) {activity $G$};
          \fi
          
        \end{tikzpicture}
        \caption{Two different trace variants showing the execution order of \emph{non-atomic} activities, i.e, each activity is split into start and complete}
        \label{fig:classic_variant_explorer_b}
    \end{subfigure}
    \caption{Classic trace variant visualizations for \emph{(non)-atomic} process activities}
    \label{fig:classic_variant_explorer}
\end{figure}
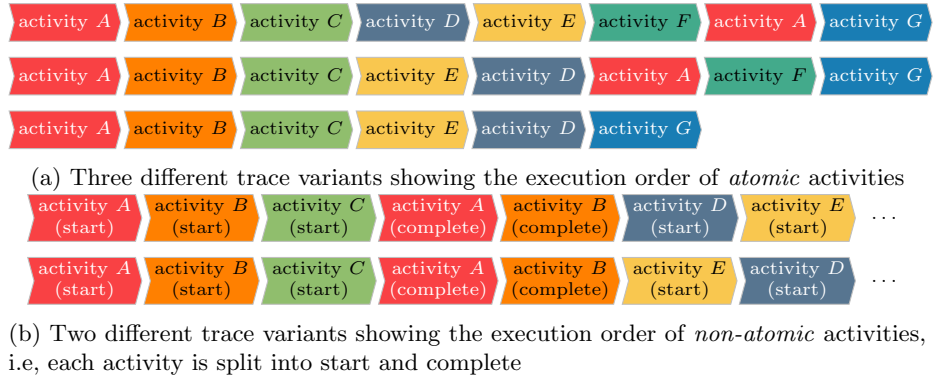

As in other data analysis applications, \emph{visual analytics} for event data are important in the application of process mining.
%The importance of said techniques is supported by a 
A state-of-the-art process mining methodology~\cite{10.1007/978-3-319-19069-3_19} lists \emph{process analytics} including visual analytics as a key component next to the classic fields of process mining: process discovery, conformance checking, and process enhancement.

%\vspace{-1cm}
A visualization approach that is used across various process mining tools, ranging from industry to scientific tools, is called the \emph{variant explorer}.
Consider \autoref{fig:classic_variant_explorer_a} for an example.
In classic trace variant visualizations, a \emph{variant} describes a unique sequence of executed process activities. 
%An \emph{event log}, a collection of event data from a specific process, contains usually various different variants.
Thus, a \emph{strict total order} on the contained activities is required to visualize such sequence.
Recorded timestamps of the executed activities are usually used for ordering them.

\begin{figure}
    \centering
    \begin{subfigure}[b]{0.65\textwidth}
        \resizebox{\textwidth}{!}{%
        \begin{tikzpicture}
            \tikzstyle{labels}=[font=\footnotesize,color=gray]
            \node[align=center](text) at (4.5,3.1) {\emph{Case/Process Instance 1}};
            \node[align=center](text2) at (4.5,3.5) {};
            
            \draw[dashed,lightgray] (0,.5) to (0,3);
            \node[labels] at (0,.3) {08:00};
            \draw[dashed,lightgray] (1,.5) to (1,3);
            \node[labels] at (1,.3) {09:00};
            \draw[dashed,lightgray] (2,.5) to (2,3);
            \node[labels] at (2,.3) {10:00};
            \draw[dashed,lightgray] (3,.5) to (3,3);
            \node[labels] at (3,.3) {11:00};
            \draw[dashed,lightgray] (4,.5) to (4,3);
            \node[labels] at (4,.3) {12:00};
            \draw[dashed,lightgray] (5,.5) to (5,3);
            \node[labels] at (5,.3) {13:00};
            \draw[dashed,lightgray] (6,.5) to (6,3);
            \node[labels] at (6,.3) {14:00};
            \draw[dashed,lightgray] (7,.5) to (7,3);
            \node[labels] at (7,.3) {15:00};
            \draw[dashed,lightgray] (8,.5) to (8,3);
            \node[labels] at (8,.3) {16:00};
            \draw[dashed,lightgray] (9,.5) to (9,3);
            \node[labels] at (9,.3) {17:00};
        
            \node[circle,fill=black,inner sep=0pt, minimum size=3pt] (A+S) at (0,2.6) {};
            \node[circle,fill=black,inner sep=0pt, minimum size=3pt] (A+C) at (1.5,2.6) {};
            \draw[-] (A+S) to node[above] {A} (A+C);
            
            \node[circle,fill=black,inner sep=0pt, minimum size=3pt] (B+S) at (.5,2.1) {};
            \node[circle,fill=black,inner sep=0pt, minimum size=3pt] (B+C) at (3,2.1) {};
            \draw[-] (B+S) to node[above] {B} (B+C);
            
            \node[circle,fill=black,inner sep=0pt, minimum size=3pt] (C+S) at (1,1.6) {};
            \node[circle,fill=black,inner sep=0pt, minimum size=3pt] (C+C) at (4,1.6) {};
            \draw[-] (C+S) to node[above] {C} (C+C);
            
            \node[circle,fill=black,inner sep=0pt, minimum size=3pt] (D+S) at (3.5,1.1) {};
            \node[circle,fill=black,inner sep=0pt, minimum size=3pt] (D+C) at (5.5,1.1) {};
            \draw[-] (D+S) to node[above] {D} (D+C);
            
            \node[circle,fill=black,inner sep=0pt, minimum size=3pt] (E+S) at (3.7,.6) {};
            \node[circle,fill=black,inner sep=0pt, minimum size=3pt] (E+C) at (5,.6) {};
            \draw[-] (E+S) to node[above] {E} (E+C);
            
            \node[circle,fill=black,inner sep=0pt, minimum size=3pt] (F+S) at (6,2.6) {};
            \node[circle,fill=black,inner sep=0pt, minimum size=3pt] (F+C) at (7,2.6) {};
            \draw[-] (F+S) to node[above] {F} (F+C);
            
            \node[circle,fill=black,inner sep=0pt, minimum size=3pt] (G+S) at (6.5,2.1) {};
            \node[circle,fill=black,inner sep=0pt, minimum size=3pt] (G+C) at (8,2.1) {};
            \draw[-] (G+S) to node[above] {A} (G+C);
            
            \node[circle,fill=black,inner sep=0pt, minimum size=3pt] (H+S) at (8.5,2.6) {};
            \node[circle,fill=black,inner sep=0pt, minimum size=3pt] (H+C) at (9,2.6) {};
            \draw[-] (H+S) to node[above] {G} (H+C);
            
            \draw[->] (0,.5) to (9,.5);
            \node[] (t) at (9,.7) {$t$};
        
        \end{tikzpicture}
        }
        \resizebox{\textwidth}{!}{%
        \begin{tikzpicture}
            \tikzstyle{labels}=[font=\footnotesize,color=gray]
            \node[align=center](text) at (4.5,3.1) {\emph{Case/Process Instance 2}};
            \node[align=center](text2) at (4.5,3.5) {};
            
            \draw[dashed,lightgray] (0,.5) to (0,3);
            \node[labels] at (0,.3) {08:00};
            \draw[dashed,lightgray] (1,.5) to (1,3);
            \node[labels] at (1,.3) {09:00};
            \draw[dashed,lightgray] (2,.5) to (2,3);
            \node[labels] at (2,.3) {10:00};
            \draw[dashed,lightgray] (3,.5) to (3,3);
            \node[labels] at (3,.3) {11:00};
            \draw[dashed,lightgray] (4,.5) to (4,3);
            \node[labels] at (4,.3) {12:00};
            \draw[dashed,lightgray] (5,.5) to (5,3);
            \node[labels] at (5,.3) {13:00};
            \draw[dashed,lightgray] (6,.5) to (6,3);
            \node[labels] at (6,.3) {14:00};
            \draw[dashed,lightgray] (7,.5) to (7,3);
            \node[labels] at (7,.3) {15:00};
            \draw[dashed,lightgray] (8,.5) to (8,3);
            \node[labels] at (8,.3) {16:00};
            \draw[dashed,lightgray] (9,.5) to (9,3);
            \node[labels] at (9,.3) {17:00};
            
            \node[circle,fill=black,inner sep=0pt, minimum size=3pt] (A+S) at (0,2.6) {};
            \node[circle,fill=black,inner sep=0pt, minimum size=3pt] (A+C) at (1.5,2.6) {};
            \draw[-] (A+S) to node[above] {A} (A+C);
            
            \node[circle,fill=black,inner sep=0pt, minimum size=3pt] (B+S) at (1,2.1) {};
            \node[circle,fill=black,inner sep=0pt, minimum size=3pt] (B+C) at (2,2.1) {};
            \draw[-] (B+S) to node[above] {B} (B+C);
            
            \node[circle,fill=black,inner sep=0pt, minimum size=3pt] (C+S) at (1,1.6) {};
            \node[circle,fill=black,inner sep=0pt, minimum size=3pt] (C+C) at (4,1.6) {};
            \draw[-] (C+S) to node[above] {C} (C+C);
            
            \node[circle,fill=black,inner sep=0pt, minimum size=3pt] (D+S) at (3.5,1.1) {};
            \node[circle,fill=black,inner sep=0pt, minimum size=3pt] (D+C) at (5.5,1.1) {};
            \draw[-] (D+S) to node[above] {D} (D+C);
            
            \node[circle,fill=black,inner sep=0pt, minimum size=3pt] (E+S) at (3,.6) {};
            \node[circle,fill=black,inner sep=0pt, minimum size=3pt] (E+C) at (5,.6) {};
            \draw[-] (E+S) to node[above] {E} (E+C);
            
            \node[circle,fill=black,inner sep=0pt, minimum size=3pt] (F+S) at (6,2.6) {};
            \node[circle,fill=black,inner sep=0pt, minimum size=3pt] (F+C) at (7,2.6) {};
            \draw[-] (F+S) to node[above] {F} (F+C);
            
            \node[circle,fill=black,inner sep=0pt, minimum size=3pt] (G+S) at (5.8,2.1) {};
            \node[circle,fill=black,inner sep=0pt, minimum size=3pt] (G+C) at (7,2.1) {};
            \draw[-] (G+S) to node[above] {A} (G+C);
            
            \node[circle,fill=black,inner sep=0pt, minimum size=3pt] (H+S) at (7.5,2.6) {};
            \node[circle,fill=black,inner sep=0pt, minimum size=3pt] (H+C) at (9,2.6) {};
            \draw[-] (H+S) to node[above] {G} (H+C);
            
            \draw[->] (0,.5) to (9,.5);
            \node[] (t) at (9,.7) {$t$};
        
        \end{tikzpicture}
        }
        \caption{Time plots visualizing activity instances, i.e., each activity has a start-timestamp and a complete-timestamp, executed within two different cases/process instance}
        \label{fig:time-plot-activity-instances}
    \end{subfigure}
    \hfill
    \begin{subfigure}[b]{0.3\textwidth}
        \begin{tikzpicture}
            \node[] (A) {A};
            \node[below of=A] (B) {B};
            \node[below of=B] (C) {C};
            \node[right of= A] (D) {D};
            \node[right of= B] (E) {E};
            
            \draw[->] (A) -- (D);
            \draw[->] (A) -- (E);
            \draw[->] (B) -- (D);
            \draw[->] (B) -- (E);
            
            \node[right of= D] (F) {F};
            \node[right of= E] (G) {A};        
            
            \draw[->] (D) -- (F);
            \draw[->] (E) -- (G);
            \draw[->] (D) -- (G);
            \draw[->] (E) -- (F);
            \draw[->, bend right=15] (C) to (F);
            \draw[->] (C) -- (G);
            \node[right of= F] (H) {G};     
            
            \draw[->] (F) -- (H);
            \draw[->] (G) -- (H);
            
            \draw[->, bend left, dotted] (A) to (F);
            \draw[->, bend left=40, dotted] (A) to (H);
            \draw[->, bend left, dotted] (D) to (H);
            
            \draw[->,dotted] (E) to (H);
            
            \draw[->, bend right=20, dotted] (B) to (G);
            \draw[->, bend right=55, dotted] (B) to (H);
            
            \draw[->, bend right=40, dotted] (C) to (H);
            \draw[->,dotted] (B) -- (F);
            \draw[->,dotted] (B) -- (H);
    
        \end{tikzpicture}
        \caption{Visualization of the corresponding interval order. Vertices represent activity instances. Arcs indicate an ordering between activity instances}
        \label{fig:partial-order}
    \end{subfigure}
    \caption{Visualizing partially ordered event data. Each interval shown in \autoref{fig:time-plot-activity-instances}, i.e., an activity instance, describes the execution of an activity. $A,\dots,G$ represent activity labels. Both visualized cases/process instances (\autoref{fig:time-plot-activity-instances}) correspond to the same interval order (\autoref{fig:partial-order}). Note that we consider two activity instances to be unrelated if they overlap in time}
    \label{fig:example_instance_and_partial_order}
\end{figure}
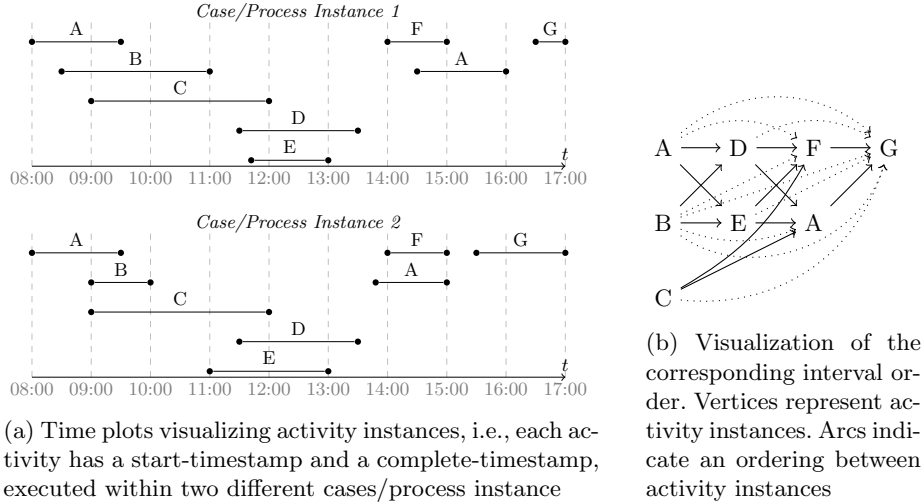

This classic trace variant visualization has two main limitations.
(1) Assume atomic process activities, i.e., a single timestamp is recorded for each process activity.
A \emph{strict} total order cannot be derived if multiple activities have the same timestamp. 
In such cases, the sequential visualization, indicating temporal execution, of process activities is problematic because a second-order criteria is needed to obtain a strict total order.
(2) In many real-life scenarios, process activities are performed over time, i.e., they are \emph{non}-atomic.
Thus, the execution of activities may intersect with each other.
Consider \autoref{fig:time-plot-activity-instances} for an example.
Considering both start- and complete-timestamps, a strict total order cannot be obtained if the executions of activities overlap.
The classic trace variant explorer usually splits the activities in start and complete as shown in \autoref{fig:classic_variant_explorer_b} to obtain atomic activities.
However, the parallel behavior of activities is not easily discernible from the visualization.
In addition, the first limitation remains.

In this paper, we propose a novel visualization of trace variants to overcome the two aforementioned limitations.  
We define a variant as an \emph{interval order}, which can be represented as a graph.
For instance, \autoref{fig:partial-order} shows the interval order of the two process executions shown in \autoref{fig:time-plot-activity-instances}.
%In the example, the activities $A$ and $B$ are unrelated to each other (no path from $A$ to $B$ and vice versa exists) since the executions of these activities overlap (\autoref{fig:time-plot-activity-instances}), but both are executed before $D$ and $E$ (indicated by arcs) and other activities executed later.
The graph representation of an interval order (cf. \autoref{fig:partial-order}) is, however, not easy to read compared to the classic trace variant explorer (cf. \autoref{fig:classic_variant_explorer}).
Therefore, we propose an approach to derive a visualization from interval orders representing trace variants.

The remainder of this paper is structured as follows.
\autoref{sec:related_work} presents related work. 
\autoref{sec:background} introduces concepts and definitions used throughout this paper.
\autoref{sec:approach} introduces the proposed visualization approach.
\autoref{sec:evaluation} presents an experimental evaluation, and \autoref{sec:conclusion} concludes this paper.

\section{Related Work}
\label{sec:related_work}
%This section presents related work.
%We focus here on related work tailored for partially ordered event data and state-of-the-art visualization approaches for trace variants.

For a general overview of process mining, we refer to~\cite{vanderAalst2016}.
Note that the majority of process mining techniques assume totally ordered event data.
For example, in process discovery few algorithms exist that utilize life cycle information, i.e., more than one timestamp, of the recorded process activities.
For instance, the Inductive Miner algorithm has been extended in~\cite{10.1007/978-3-319-42887-1_17} to utilize start- and complete-timestamps of process activities.
Also in conformance checking there exist algorithms that utilize life cycle information, e.g., \cite{10.1007/978-3-319-15895-2_7}. 
A complete overview of techniques utilizing life cycle information is outside the scope of this paper. 

In~\cite{10.1007/978-3-319-19069-3_19}, the authors present a methodology for conducting process mining projects and highlight the importance of visual analytics.
In~\cite{10.1007/978-3-319-53435-0_7}, open challenges regarding visual analytics in process mining are presented.
The visualization of time-oriented event data---the topic of this paper---is identified as a challenge.

The classic variant explorer as shown in \autoref{fig:classic_variant_explorer} can be found in many different process mining tools, e.g., in ProM\footnote{\url{https://www.promtools.org}}, which is an open-source process mining software tool. 
In~\cite{10.2312:eurova.20151106}, the authors present a software tool to visualize event data.
Various visualizations of event data are offered; however, a variant explorer, as considered in this paper, is not available.
In \cite{van2021mayItakeYourOrder}, the authors present a plugin for ProM to visualize partially ordered event data.
The approach considers events to be atomic, i.e., an event representing the start and an event representing the completion of an activity are considered to be separate events.
Based on a user-selected time granularity, events within the same time segment are aggregated, i.e., they are considered and visualized to be executed in parallel.
This offers the advantage that the user can change the visualization depending on how accurately the timestamps are to be interpreted.
Compared to our approach, we consider non-atomic activity instances, i.e., we map start and complete events of a process activity to an activity instance.
Next, we relate these activity instances to each other instead of atomic events as proposed in~\cite{van2021mayItakeYourOrder}.
Therefore, both approaches, the one presented in~\cite{van2021mayItakeYourOrder} and the one presented in this paper, can coexist and each have their advantages and disadvantages.

\section{Preliminaries}
\label{sec:background}
In this section, we present concepts and definitions used within this paper.

\begin{table}[tb]
\caption{Example of event data}
\label{tab:event_data}
\centering
\scriptsize
\begin{tabular}{ c  c  c  c  c  c  c }
    \toprule
    \textbf{Event-ID} & \textbf{Case-ID} & \textbf{Activity Label} & \textbf{Start-timestamp} & \textbf{Complete-timestamp} & \textbf{Resource} & \dots \\
    \midrule
    %\vdots & \vdots & \vdots & \vdots & \vdots & \vdots \\
    1 & 1 & activity $A$ & 07/13/2021 08:00 & 07/13/2021 09:30 & staff & \dots \\  
    2 & 1 & activity $B$ & 07/13/2021 08:30 & 07/13/2021 11:00 & staff &\dots \\  
    3 & 1 & activity $C$ & 07/13/2021 09:00 & 07/13/2021 12:00 & staff & \dots \\  
    4 & 1 & activity $D$ & 07/13/2021 11:30 & 07/13/2021 13:30 & staff & \dots \\  
    5 & 1 & activity $E$ & 07/13/2021 11:40 & 07/13/2021 13:00 & supervisor & \dots \\  
    6 & 1 & activity $F$ & 07/13/2021 14:00 & 07/13/2021 15:00 & manager & \dots \\  
    7 & 1 & activity $A$ & 07/13/2021 14:30 & 07/13/2021 16:00 & staff & \dots \\  
    8 & 1 & activity $G$ & 07/13/2021 16:30 & 07/13/2021 17:00 & staff & \dots\\
    9 & 2 & activity $A$ & 07/13/2021 08:00 & 07/13/2021 09:30 & staff & \dots \\  
    %10 & 2 & activity $B$ & 07/13/2021 09:00 & 07/13/2021 10:00 & staff & \dots \\  
   % 11 & 2 & activity $C$ & 07/13/2021 09:00 & 07/13/2021 12:00 & staff & \dots \\  
    \vdots & \vdots & \vdots & \vdots & \vdots & \vdots & \vdots \\\bottomrule
\end{tabular}
\end{table}

Event data describes the historical execution of processes.
\autoref{tab:event_data} shows an example of said event data.
Each row corresponds to an event, i.e., in the given example an \emph{activity instance}.\footnote{Note that in some event logs, the start and the completion of an activity are separate events (i.e., separate rows). Observe that such records are easily transformed to our notion of event data.}
For example, the first event, identified by event-id $1$, recorded that activity $A$ has been executed from 08:00 until 09:30 at 07/13/2021 within the process instance identified by case-id $1$.  

In general, activity instances describe the execution of a process activity within a specific case.
A \emph{case} describes a single execution of a process, i.e., a process instance, and it is formally a set of activity instances that have been executed for the same case. 
Activity instances consist of at least the following attributes: an identifier, a case-id, an activity label, a start-timestamp, and a complete-timestamp.
%As shown in \autoref{tab:event-log}, activity instances can have additional attributes, e.g., resource, cost, and .
Since we are only interested in the order of activity instances within a case and not in possible additional attributes of an activity instance, we define activity instances as a 5-tuple.
%For instance, the first 

\begin{definition}[Universes]
%Let $\mathcal{N}$ be the universe of attribute names. %with $\{t_{start},t_{complete},case,act\} {\subseteq} \mathcal{N}$ and 
%Let $\mathcal{V}$ be the universe of attribute values.
$\mathcal{T}$ is the universe of totally ordered timestamps. 
$\mathcal{L}$ is the universe of activity labels.
$\mathcal{C}$ is the universe of case identifiers.
$\mathcal{I}$ is the universe of activity instance identifiers.
\end{definition}

%Given the universes, we define an activity instance as a 5-tuple.

\begin{definition}[Activity Instance\iffalse\footnote{Note that this definition does not allow to distinguish two activity instances where the same activity is executed within the same case at the same timestamp. Additionally, activity instance ids would be required. For simplicity, however, we do not add ids to activity instances.}\fi]
An activity instance $(i,c,l,t_s,t_c) {\in} \allowbreak \mathcal{I} {\times} \mathcal{C} {\times} \allowbreak \mathcal{L} {\times} \allowbreak \mathcal{T} {\times} \mathcal{T}$ describes the execution of an activity labeled $l$ within the case $c$.  
The start-timestamp of the activity's execution is $t_s$, and the complete-timestamp is $t_c$, where $t_s {\leq} t_c$.
Each activity instance is uniquely identifiable by $i$.
We denote the universe of activity instances by $\mathcal{A}$.
\end{definition}

Note that any event log with only one timestamp per executed activity can also be easily expressed in terms of activity instances, i.e., $t_s{=}t_c$. 
For a given activity instance $a {=} (i,c,l,t_s,t_c) {\in} \mathcal{A}$, we define projection functions: $\pi^{i}(a) {=} i$, $\pi^{c}(a) {=} c$, $\pi^{l}(a) {=} l$, $\pi^{t_s}(a) {=} t_s$, and $\pi^{t_c}(a) {=} t_c$. 

\begin{definition}[Event Log]
An event log $E$ is a set of activity instances, i.e., $E {\subseteq} \mathcal{A}$ such that for $a_1,a_2{\in}E \land \pi^i(a_1){=}\pi^i(a_2) \Rightarrow a_1{=}a_2$.
We denote the universe of event logs by $\mathcal{E}$.
\end{definition}

For a given event log $E {\in} \mathcal{E}$, we refer to the set of activity instances executed within a given case $c{\in}\mathcal{C}$ as a \emph{trace}, i.e., $T_c{=}\{a {\in} E \mid \land \pi^c(a) {=} c\}$.
As shown in \autoref{fig:time-plot-activity-instances}, we can visualize a trace and its activity instances in a time plot.

Note that each activity instance $a {=} (i,c,l,t_s,t_c) {\in} \mathcal{A}$ defines an interval on the timeline, i.e., $[t_s,t_c]$. 
A collection of intervals---in this paper we focus on traces---defines an \emph{interval order}.
In general, given two activity instances $a_1,a_2{\in}\mathcal{A}$, we say $a_1{<}a_2$ iff $\pi^{t_c}(a_1){<}\pi^{t_s}(a_2)$.
Note that interval orders are a proper subclass of strict partial orders~\cite{FISHBURN1970144}; hence, interval orders satisfy: irreflexivity, transitivity, and asymmetry. 
Interval orders additionally satisfy the \emph{interval order condition}, i.e., for any $x,y,w,z : x{<}w \land y{<}z \Rightarrow x{<}z \lor y{<}w$~\cite{FISHBURN1970144}.
%In general, every interval order is an interval order but not vice versa.
%A further overview of order theory is outside the scope of this paper. 
%For more details, we refer to~\cite{BOUSQUETMELOU2010884,FISHBURN1970144}.

%For a given trace, we define its \emph{interval order}, indicating the ordering relations between the different executed activity instances.
In this paper, we represent an interval order as a directed, labeled graph that consists of vertices $V$, representing activity instances, and directed edges $V {\times} V$, representing ordering relations between activity instances.
\autoref{fig:partial-order} shows the interval order of the traces shown in \autoref{fig:time-plot-activity-instances}.
We observe that the first two activity instances labeled with $A$ and $B$ are incomparable to each other because there is no arc from either $A$ to $B$ or vice versa. 
Thus, the first execution of $A$ and $B$ are executed in parallel, i.e., their intervals overlap.
For example, activity $C$ is related to $F$, $G$ and the second execution of $A$.
Thus, $C$ is executed before $F$, $G$ and the second execution of $A$.
Next, we formally define the construction of the directed graph representing the interval order of a trace. 

\begin{definition}[Interval Order of a Trace]
Given a trace $T_c {\subseteq} \mathcal{A}$, we define the corresponding interval order as a labeled, directed graph $(V,E,\lambda)$ consisting of vertices $V$, directed edges $E{=}(V {\times} V)$, and a labeling function $\lambda:V {\to} \mathcal{L}$.
The set of vertices is defined by $V{=}  T_c $ with $\lambda(a){=}\pi^l(a)$. %for all $a {\in} T$.
Given two activity instances $a^1,a^2 {\in} T$, there is a directed edge $\big(\pi^i(a_1),\pi^i(a_2)\big) {\in} E$ iff $\pi^{t_c}(a_1) {<} \pi^{t_s}(a_2)$.
We denote the universe of interval orders by $\mathcal{P}$.
\end{definition}
Next, we define the induced interval order.
\begin{definition}[Induced Interval Order]
Given $(V,E,\lambda) {\in} \mathcal{P}$.
For $V' {\subseteq} V$, we define the induced interval order, i.e., the induced subgraph, $(V',E',\allowbreak\lambda') {\in} \mathcal{P}$ with $E' {=} E {\cap} (V' {\times} V')$ and $ \lambda'(v){=}\lambda(v)$ for all $v{\in}V'$.
\end{definition}

\section{Visualizing Trace Variants}
\label{sec:approach}
This section introduces the proposed approach to visualize trace variants from partially ordered event data.
%The section is structured as follows.
\autoref{sec:approach_subsection} introduces the approach, and \autoref{sec:guarantees} proves that the approach is deterministic.
\autoref{sec:limitations} discusses the potential limitations of the approach.
Finally, \autoref{sec:implementation} covers the implementation.

\subsection{Approach}
\label{sec:approach_subsection}

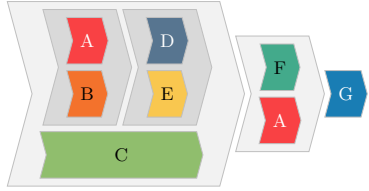
\begin{figure}[tb]
    \centering
    \centering
        \resizebox{.4\textwidth}{!}{%
        \begin{tikzpicture}
          \tikzset{
            arrow/.style={
              draw=lightgray,
              minimum height=.8cm,
              inner sep=.4em,
              shape=signal,
              signal from=west,
              signal to=east,
              signal pointer angle=150,
            }
          }
          %\node[arrow] {first};
          \node[arrow,align=center,fill=black!5] (1) 
            {\begin{tikzpicture}
                \node[arrow, fill=black!15] (AB) 
                    {\begin{tikzpicture}
                        \node[arrow,fill=red,text=white] (A) {A};
                        \node[arrow,below=.1cm of A,fill=orange1] (B) {B};
                    \end{tikzpicture}};
                \node[arrow, right=.1cm of AB,fill=black!15] (DE) 
                    {\begin{tikzpicture}
                        \node[arrow,fill=blue1,text=white] (D) {D};
                        \node[arrow,below=.1cm of D,fill=yellow] (E) {E};
                    \end{tikzpicture}};
                \coordinate (CENTER) at ($(AB)!0.5!(DE)$);
                \node[arrow,minimum width=2.8cm,xshift=-.1cm,below=1.1cm of CENTER,fill=green1] (C) {C};
            \end{tikzpicture}};
          
          \node[arrow,align=center,right=.1cm of 1,fill=black!5] (2)   
            {\begin{tikzpicture}
                \node[arrow,fill=green2] (F) {F};
                \node[arrow, below=.1cm of F,fill=red,text=white] (G) {A};
            \end{tikzpicture}};
          
          \node[arrow,align=center,right=.1cm of 2,fill=blue2,text=white] (3) {G};
        \end{tikzpicture}
        }
    \caption{Proposed visualization for the interval order shown in \autoref{fig:partial-order}}
    \label{fig:variant_visualization}
\end{figure}

The proposed visualization approach of trace variants is based on chevrons, a graphical element known from classical trace variant visualizations (cf. \autoref{fig:classic_variant_explorer}).
\autoref{fig:variant_visualization} shows an example of the proposed visualization for the interval order given in \autoref{fig:partial-order}.
The interpretation of a chevron as indicating sequential order is maintained in our approach.
Additionally, chevrons can be nested and stacked on top of each other.
Stacked chevrons indicate parallel/overlapping execution of activities.
Nested chevrons relate groups of activities to each other.
In the given example, the first chevron indicates that C is executed in parallel to A, B, D, and E.
The two upper chevrons indicate that A and B are executed in parallel, but are executed before D and E, both of which are also executed in parallel. 
%By using chevrons, the approach allows to graphically represent interval orders in a familiar manner to the user.

The proposed approach assumes an interval order, representing a trace variant, as input and \emph{recursively} partitions the interval order by applying cuts to compute the layout of the visualization (cf. \autoref{fig:variant_visualization}).
In general, a cut is a partition of the nodes of a given interval order.
Based on the partition, induced interval orders are derived.
Each application of such a cut corresponds to chevrons and their positioning in the final visualization, e.g., stacked or side-by-side chevrons.
Nested chevrons result from the recursive manner.
Next, we define the computation of the proposed layout, i.e., we define two types of cuts.

\begin{figure}[tb]
    \centering
    \begin{subfigure}[t]{0.55\textwidth}
    \centering
    \begin{tikzpicture}[align=center,node distance=1cm] 
        \node[circle,fill=black,inner sep=0pt, minimum size=3pt,label=$v_1$](1){};
        \node[circle,fill=black,inner sep=0pt, minimum size=3pt,below=of 1,label=below:$v_2$](2){};
        \node[circle,fill=black,inner sep=0pt, minimum size=3pt,left=of 2,label=below:$v_0$](0){};
        \draw[->] (0) to (2);
        
        \node[circle,fill=black,inner sep=0pt, minimum size=3pt,right of=1,label=$v_3$](3){};
        \draw[->] (1) to (3);
        \draw[->] (2) to (3);
        \draw[->] (0) to (3);

        \node[circle,fill=black,inner sep=0pt, minimum size=3pt,right=of 3,label=$v_4$](4){};
        \node[circle,fill=black,inner sep=0pt, minimum size=3pt,below=of 4,label=below:$v_5$](5){};
        \draw[->] (3) to (4);
        \draw[->] (3) to (5);
        \draw[->] (0) to (4);
        \draw[->,bend right] (0) to (5);
        
        \draw[->,bend left=40] (1) to (4);
        \draw[->] (1) to (5);
        
        \draw[->] (2) to (4);
        \draw[->] (2) to (5);
        
        \draw[dashed,blue] (.5,-1.7) to (-1.5,-1.7) to (-1.5,1) to (.5,1) to  (.5,-1.7);
        \draw[dashed,blue] (.5,-1.7) to (1.5,-1.7) to (1.5,1) to (.5,1);
        \draw[dashed,blue] (1.5,-1.7) to (3,-1.7) to (3,1) to (1.5,1);
        
        \node[blue](v1) at (-.5,.7) {$V_1$};
        \node[blue](v1) at (1.,.7) {$V_2$};
        \node[blue](v1) at (2.25,.7) {$V_3$};
    \end{tikzpicture}
    \caption{Interval order and a maximal ordering cut}
    \end{subfigure}
    \hfill
    \begin{subfigure}[t]{0.44\textwidth}
    \centering
    \begin{tikzpicture}[align=center,node distance=1cm and 1cm] 
        \node[circle,fill=black,inner sep=0pt, minimum size=3pt,label=$v_1$](1){};
        \node[circle,fill=black,inner sep=0pt, minimum size=3pt,below=of 1,label=below:$v_2$](2){};
        \node[circle,fill=black,inner sep=0pt, minimum size=3pt,left=of 2,label=below:$v_0$](0){};
        \draw[->] (0) to (2);
    \end{tikzpicture}
    \caption{Induced interval order based on $V_1$}
    \end{subfigure}
    \caption{Example of an ordering cut, i.e., a partition of the nodes into $V_1{=}\{v_0,\allowbreak v_1,\allowbreak v_2\},\allowbreak V_2{=}\{v_3\},V_3{=}\{v_4,v_5\}$, and one corresponding induced interval order for $V_1$}
    \label{fig:example_ordering_cut}
\end{figure}
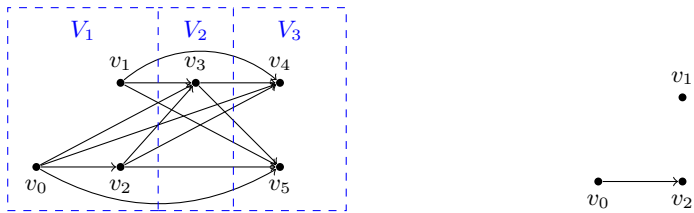

An \emph{ordering cut} partitions the activity instances into sets such that these sets can be totally ordered, i.e., all activity instances within a set can be related to all other activity instances from other sets.
In terms of the graph representation of an interval order, this implies that all nodes from one partition have a directed edge to all nodes from the other partition(s).
We depict an example of an ordering cut in \autoref{fig:example_ordering_cut}. 
Note that all nodes in $V_1$ are related to all nodes in $V_2$ and $V_3$.
%In general, $\forall v_1{\in}V_1,v_2{\in}V_2,v_3{\in}V_3 \big( v_1 {\leq} v_2 {\leq} v_3 \big)$.
Next, we formally define an ordering cut for an interval order.

% Reconsider the trace shown in \autoref{fig:example_instance_and_partial_order}.

\begin{definition}[Ordering Cut]
\label{def:ordering_cut}
Assume an interval order $(V,E,\lambda){\in}\mathcal{P}$. 
An ordering cut describes a partition of the nodes $V$ into $n {>} 1$ non-empty subsets $V_1,\dots,V_n$ such that: $\forall 1{\leq}i{<}j{\leq}n\Big( \forall v {\in} V_i,v' {\in} V_j \big((v,v'){\in}E\big) \Big)$.
\end{definition}

A \emph{parallel cut} indicates that activity instances from one partition overlap in time with activity instances in the other partition(s), i.e., activity instances from different partitions are unrelated to each other.
Thus, we are looking for \emph{components} in the graph representation of an interval order.

\begin{definition}[Parallel Cut]
\label{def:parallel_cut}
Assume an interval order $(V,E,\lambda){\in}\mathcal{P}$.
A parallel cut describes a partition of the nodes $V$ into $n {\geq} 1$ non-empty subsets $V_1,\dots,V_n$ such that $V_1,\dots,V_n$ represent connected components of $(V,E,\lambda)$, i.e., $\forall 1{\leq}i{<}j{\leq}n\allowbreak\Big(\forall v{\in}V_i \forall v' {\in} V_j\big( (v,v'){\notin}E \land (v',v){\notin}E   \big)\Big)$.
\end{definition}

We call a cut \emph{maximal} if $n$, i.e., the number of subsets, is maximal. 
%In the remainder of this paper, we always consider maximal cuts.

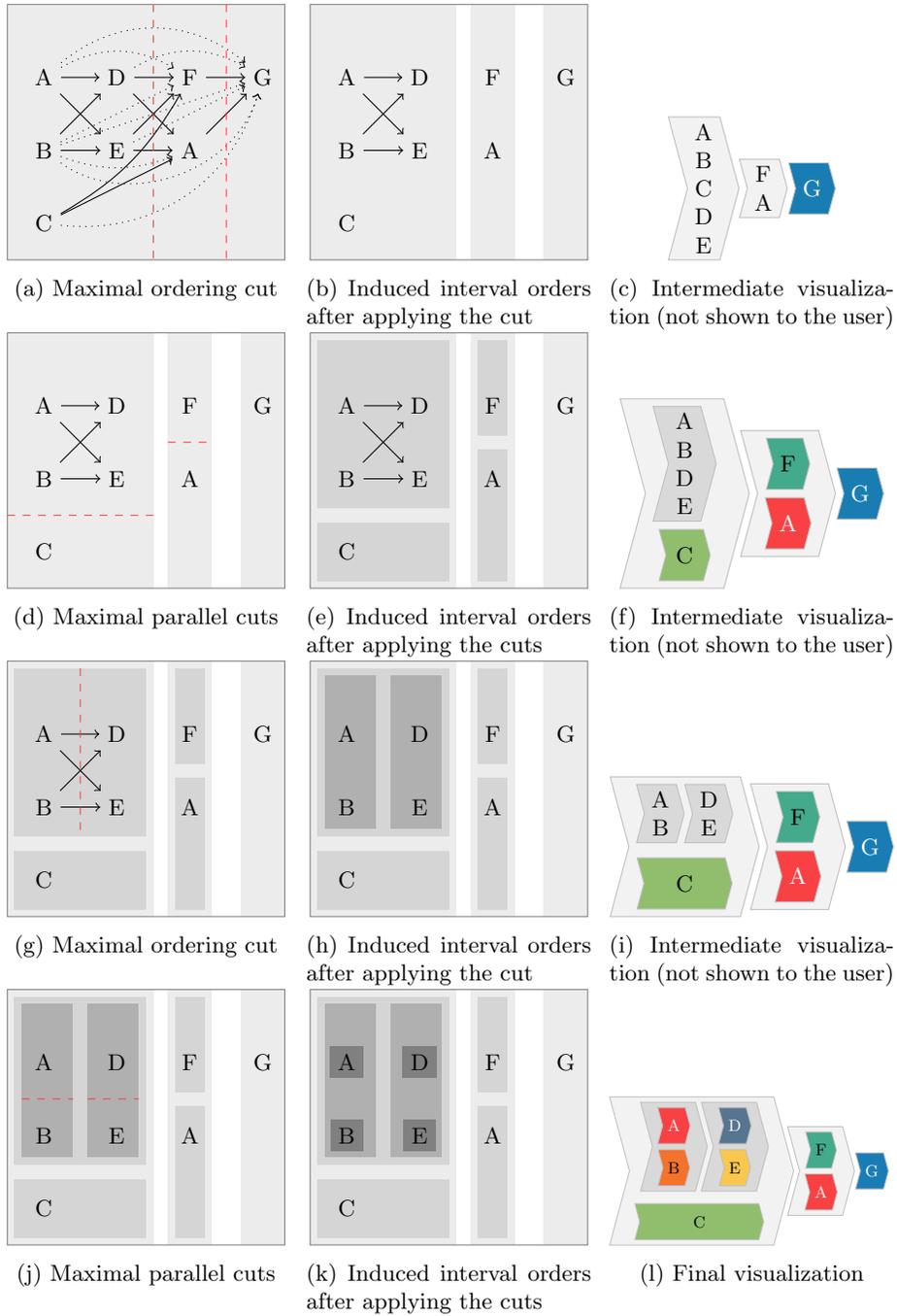
\begin{figure}
    \centering
    \begin{subfigure}[t]{0.32\textwidth}
        \centering
        \begin{tikzpicture}
            \draw[fill=black!15,draw=none,opacity=0.5] (-.5,1) rectangle (3.5 - 0.2,-2.5);
            \node[] (A) {A};
            \node[below of=A] (B) {B};
            \node[below of=B] (C) {C};
            \node[right of= A] (D) {D};
            \node[right of= B] (E) {E};
            
            \draw[->] (A) -- (D);
            \draw[->] (A) -- (E);
            \draw[->] (B) -- (D);
            \draw[->] (B) -- (E);
            
            \node[right of= D] (F) {F};
            \node[right of= E] (G) {A};        
            
            \draw[->] (D) -- (F);
            \draw[->] (E) -- (G);
            \draw[->] (D) -- (G);
            \draw[->] (E) -- (F);
            \draw[->, bend right=15] (C) to (F);
            \draw[->] (C) -- (G);
            \node[right of= F] (H) {G};     
            
            \draw[->,dotted] (B) -- (F);
            \draw[->,dotted] (B) -- (H);
            \draw[->] (F) -- (H);
            \draw[->] (G) -- (H);
            
            \draw[->, bend left, dotted] (A) to (F);
            \draw[->, bend left=40, dotted] (A) to (H);
            \draw[->, bend left, dotted] (D) to (H);
            
            \draw[->,dotted] (E) to (H);
            
            \draw[->, bend right=20, dotted] (B) to (G);
            \draw[->, bend right=55, dotted] (B) to (H);
            
            \draw[->, bend right=40, dotted] (C) to (H);
            
            \draw[dashed,red] (2.5,1) to (2.5,-2.5);
            \draw[dashed,red] (1.5,1) to (1.5,-2.5);

            \draw[gray] (-.5,1) to (3.5-.2,1) to (3.5-.2,-2.5) to (-.5,-2.5) to (-.5,1);

        \end{tikzpicture}
        \caption{Maximal ordering cut}
        \label{fig:example_cut_detection_a}
    \end{subfigure}
    \hfill
    \begin{subfigure}[t]{0.32\textwidth}
        \centering
        \begin{tikzpicture}
            % H
            \draw[fill=black!15,draw=none,opacity=0.5] (2.5 + 0.2,1) rectangle (3.5 - 0.2,-2.5);
            
            % F,G
            \draw[fill=black!15,draw=none,opacity=0.5] (1.5 + 0.2,1) rectangle (2.5 - 0.2,-2.5);
            
            % A,B,...,E
            \draw[fill=black!15,draw=none,opacity=0.5] (-.5 ,1) rectangle (1.5,-2.5);
        
            \node[] (A) {A};
            \node[below of=A] (B) {B};
            \node[below of=B] (C) {C};
            \node[right of= A] (D) {D};
            \node[right of= B] (E) {E};
            
            \draw[->] (A) -- (D);
            \draw[->] (A) -- (E);
            \draw[->] (B) -- (D);
            \draw[->] (B) -- (E);
            
            \node[right of= D] (F) {F};
            \node[right of= E] (G) {A};        
            
            %\draw[->] (D) -- (F);
            %\draw[->] (E) -- (G);
            %\draw[->] (D) -- (G);
            %\draw[->] (E) -- (F);
            %\draw[->, bend right=15] (C) to (F);
            %\draw[->] (C) -- (G);
            \node[right of= F] (H) {G};     
            
            %\draw[->] (F) -- (H);
            %\draw[->] (G) -- (H);
            
            %\draw[->, bend left, dotted] (A) to (F);
            %\draw[->, bend left=40, dotted] (A) to (H);
            %\draw[->, bend left, dotted] (D) to (H);
            
            %\draw[->,dotted] (E) to (H);
            
            %\draw[->, bend right=20, dotted] (B) to (G);
            %\draw[->, bend right=55, dotted] (B) to (H);
            
            %\draw[->, bend right=40, dotted] (C) to (H);
            
            %\draw[ultra thick] (1.5,1) to (1.5,-2.5);
            %\draw[ultra thick] (2.5,1) to (2.5,-2.5);
            \draw[gray] (-.5,1) to (3.5-.2,1) to (3.5-.2,-2.5) to (-.5,-2.5) to (-.5,1);        
            
        \end{tikzpicture}
        \caption{Induced interval orders after applying the cut}
        \label{fig:example_cut_detection_b}
    \end{subfigure}
    \hfill
    \begin{subfigure}[t]{0.32\textwidth}
        %\resizebox{\textwidth}{!}{%
        \centering
        \begin{tikzpicture}
          \tikzset{
            arrow/.style={
              draw=lightgray,
              minimum height=.7cm,
              inner sep=.3em,
              shape=signal,
              signal from=west,
              signal to=east,
              signal pointer angle=150,
            }
          }
          %\node[arrow] {first};
          \node[arrow,align=center,fill=black!5] (1) 
            {A\\B\\C\\D\\E};
          
          \node[arrow,align=center,right=.1cm of 1,fill=black!5] (2) {F\\A};
          
          \node[arrow,align=center,right=.1cm of 2,fill=blue2,text=white] (3) {G};
        \end{tikzpicture}
        %}
        \caption{Intermediate visualization (not shown to the user)}
        \label{fig:example_cut_detection_c}
    \end{subfigure}
    \hfill
    \begin{subfigure}[t]{0.32\textwidth}
        \centering
        \begin{tikzpicture}
            \draw[fill=black!15,draw=none,opacity=0.5] (2.5 + 0.2,1) rectangle (3.5 - 0.2,-2.5);
            \draw[fill=black!15,draw=none,opacity=0.5] (1.5 + 0.2,1) rectangle (2.5 - 0.2,-2.5);
            \draw[fill=black!15,draw=none,opacity=0.5] (-.5 ,1) rectangle (1.5,-2.5);
            
            \node[] (A) {A};
            \node[below of=A] (B) {B};
            \node[below of=B] (C) {C};
            \node[right of= A] (D) {D};
            \node[right of= B] (E) {E};
            
            \draw[->] (A) -- (D);
            \draw[->] (A) -- (E);
            \draw[->] (B) -- (D);
            \draw[->] (B) -- (E);
            
            \node[right of= D] (F) {F};
            \node[right of= E] (G) {A};        
            
            %\draw[->] (D) -- (F);
            %\draw[->] (E) -- (G);
            %\draw[->] (D) -- (G);
            %\draw[->] (E) -- (F);
            %\draw[->, bend right=15] (C) to (F);
            %\draw[->] (C) -- (G);
            \node[right of= F] (H) {G};     
            
            %\draw[->] (F) -- (H);
            %\draw[->] (G) -- (H);
            
            %\draw[->, bend left, dotted] (A) to (F);
            %\draw[->, bend left=40, dotted] (A) to (H);
            %\draw[->, bend left, dotted] (D) to (H);
            
            %\draw[->,dotted] (E) to (H);
            
            %\draw[->, bend right=20, dotted] (B) to (G);
            %\draw[->, bend right=55, dotted] (B) to (H);
            
            %\draw[->, bend right=40, dotted] (C) to (H);
            
            %\draw[ultra thick] (1.5,1) to (1.5,-2.5);
            %\draw[ultra thick] (2.5,1) to (2.5,-2.5);
            \draw[gray] (-.5,1) to (3.5-.2,1) to (3.5-.2,-2.5) to (-.5,-2.5) to (-.5,1);    \draw[dashed,red] (-0.5,-1.5) to (1.7-.2,-1.5);
            \draw[dashed,red] (1.5+.2,-0.5) to (2.5-.2,-0.5);
        \end{tikzpicture}
        \caption{Maximal parallel cuts}
        \label{fig:example_cut_detection_d}
    \end{subfigure}
    \hfill
    \begin{subfigure}[t]{0.32\textwidth}
        \centering
        \begin{tikzpicture}
            \draw[fill=black!15,draw=none,opacity=0.5] (2.5 + 0.2,1) rectangle (3.5 - 0.2,-2.5);
            
            % F,G
            \draw[fill=black!15,draw=none,opacity=0.5] (1.5 + 0.2,1) rectangle (2.5 - 0.2,-2.5);
            \draw[fill=black!25,draw=none,opacity=0.5] (1.5+.1 + 0.2,1-.1) rectangle (2.5-.1 - 0.2,-.5+.1);
            
            \draw[fill=black!25,draw=none,opacity=0.5] (1.5+.1 + 0.2, -.5-.1) rectangle (2.5-.1 - 0.2,-2.5+.1);
            
            %A,B,C,D,E
            \draw[fill=black!15,draw=none,opacity=0.5] (-.5 ,1) rectangle (1.5,-2.5);
            %A,D,B,E
            \draw[fill=black!25,draw=none,opacity=0.5] (-.5+.1 ,1-.1) rectangle (1.5 - 0.1,-1.5+.1);
            %C
            \draw[fill=black!25,draw=none,opacity=0.5] (-.5+.1 ,-1.5-.1) rectangle (1.5 -.1,-2.5+.1);

            \node[] (A) {A};
            \node[below of=A] (B) {B};
            \node[below of=B] (C) {C};
            \node[right of= A] (D) {D};
            \node[right of= B] (E) {E};
            
            \draw[->] (A) -- (D);
            \draw[->] (A) -- (E);
            \draw[->] (B) -- (D);
            \draw[->] (B) -- (E);
            
            \node[right of= D] (F) {F};
            \node[right of= E] (G) {A};        
            
            %\draw[->] (D) -- (F);
            %\draw[->] (E) -- (G);
            %\draw[->] (D) -- (G);
            %\draw[->] (E) -- (F);
            %\draw[->, bend right=15] (C) to (F);
            %\draw[->] (C) -- (G);
            \node[right of= F] (H) {G};     
            
            %\draw[->] (F) -- (H);
            %\draw[->] (G) -- (H);
            
            %\draw[->, bend left, dotted] (A) to (F);
            %\draw[->, bend left=40, dotted] (A) to (H);
            %\draw[->, bend left, dotted] (D) to (H);
            
            %\draw[->,dotted] (E) to (H);
            
            %\draw[->, bend right=20, dotted] (B) to (G);
            %\draw[->, bend right=55, dotted] (B) to (H);
            
            %\draw[->, bend right=40, dotted] (C) to (H);
            
            %\draw[ultra thick] (1.5,1) to (1.5,-2.5);
            %\draw[ultra thick] (2.5,1) to (2.5,-2.5);
            \draw[gray] (-.5,1) to (3.5-.2,1) to (3.5-.2,-2.5) to (-.5,-2.5) to (-.5,1);
            %\draw[thick] (-0.3,-1.5) to (1.5,-1.5);
            %\draw[thick] (1.5,-0.5) to (2.5,-0.5);
        \end{tikzpicture}
        \caption{Induced interval orders after applying the cuts}
        \label{fig:example_cut_detection_e}
    \end{subfigure}
    \hfill
    \begin{subfigure}[t]{0.32\textwidth}
        %\resizebox{\textwidth}{!}{%
        \centering
        \begin{tikzpicture}
          \tikzset{
            arrow/.style={
              draw=lightgray,
              minimum height=.7cm,
              inner sep=.3em,
              shape=signal,
              signal from=west,
              signal to=east,
              signal pointer angle=150,
            }
          }
          %\node[arrow] {first};
          \node[arrow,align=center,fill=black!5] (1) 
            {\begin{tikzpicture}
                \node[arrow, fill=black!15] (AB) {A\\B\\D\\E};
                \node[arrow,minimum width=.7cm,xshift=-0cm,below=.1cm of AB,fill=green1] (C) {C};
            \end{tikzpicture}};
          
          \node[arrow,align=center,right=.1cm of 1,fill=black!5] (2)   
            {\begin{tikzpicture}
                \node[arrow,fill=green2] (F) {F};
                \node[arrow, below=.1cm of F,fill=red,text=white] (G) {A};
            \end{tikzpicture}};
          
          \node[arrow,align=center,right=.1cm of 2,fill=blue2,text=white] (3) {G};
        \end{tikzpicture}
        %}
        \caption{Intermediate visualization (not shown to the user)}
        \label{fig:example_cut_detection_f}
    \end{subfigure}
    \hfill
    \begin{subfigure}[t]{0.32\textwidth}
        \centering
        \begin{tikzpicture}
            
            % H
            \draw[fill=black!15,draw=none,opacity=0.5] (2.5 + 0.2,1) rectangle (3.5 - 0.2,-2.5);
            
            % F,G
            \draw[fill=black!15,draw=none,opacity=0.5] (1.5 + 0.2,1) rectangle (2.5 - 0.2,-2.5);
            %F
            \draw[fill=black!25,draw=none,opacity=0.5] (1.5+.1 + 0.2,1-.1) rectangle (2.5-.1 - 0.2,-.5+.1);
            %G
            \draw[fill=black!25,draw=none,opacity=0.5] (1.5+.1 + 0.2, -.5-.1) rectangle (2.5-.1 - 0.2,-2.5+.1);
            %A,B,C,D,E
            \draw[fill=black!15,draw=none,opacity=0.5] (-.5 ,1) rectangle (1.5,-2.5);
            %A,D,B,E
            \draw[fill=black!25,draw=none,opacity=0.5] (-.5+.1 ,1-.1) rectangle (1.5 - 0.1,-1.5+.1);
            %C
            \draw[fill=black!25,draw=none,opacity=0.5] (-.5+.1 ,-1.5-.1) rectangle (1.5 -.1,-2.5+.1);
            
            \node[] (A) {A};
            \node[below of=A] (B) {B};
            \node[below of=B] (C) {C};
            \node[right of= A] (D) {D};
            \node[right of= B] (E) {E};
            
            \draw[->] (A) -- (D);
            \draw[->] (A) -- (E);
            \draw[->] (B) -- (D);
            \draw[->] (B) -- (E);
            
            \node[right of= D] (F) {F};
            \node[right of= E] (G) {A};        
            
            %\draw[->] (D) -- (F);
            %\draw[->] (E) -- (G);
            %\draw[->] (D) -- (G);
            %\draw[->] (E) -- (F);
            %\draw[->, bend right=15] (C) to (F);
            %\draw[->] (C) -- (G);
            \node[right of= F] (H) {G};     
            
            %\draw[->] (F) -- (H);
            %\draw[->] (G) -- (H);
            
            %\draw[->, bend left, dotted] (A) to (F);
            %\draw[->, bend left=40, dotted] (A) to (H);
            %\draw[->, bend left, dotted] (D) to (H);
            
            %\draw[->,dotted] (E) to (H);
            
            %\draw[->, bend right=20, dotted] (B) to (G);
            %\draw[->, bend right=55, dotted] (B) to (H);
            
            %\draw[->, bend right=40, dotted] (C) to (H);
            
            %\draw[ultra thick] (1.5,1) to (1.5,-2.5);
            %\draw[ultra thick] (2.5,1) to (2.5,-2.5);
            \draw[gray] (-.5,1) to (3.5-.2,1) to (3.5-.2,-2.5) to (-.5,-2.5) to (-.5,1);
            %\draw[thick] (-0.3,-1.5) to (1.5,-1.5);
            \draw[dashed,red] (.5,1-.1) to (.5,-1.5+.1);
            %\draw[thick] (1.5,-0.5) to (2.5,-0.5);
        \end{tikzpicture}
        \caption{Maximal ordering cut}
        \label{fig:example_cut_detection_g}
    \end{subfigure}
    \hfill
    \begin{subfigure}[t]{0.32\textwidth}
        \centering
        \begin{tikzpicture}
            % H
            \draw[fill=black!15,draw=none,opacity=0.5] (2.5 + 0.2,1) rectangle (3.5 - 0.2,-2.5);
            
            % F,G
            \draw[fill=black!15,draw=none,opacity=0.5] (1.5 + 0.2,1) rectangle (2.5 - 0.2,-2.5);
            %F
            \draw[fill=black!25,draw=none,opacity=0.5] (1.5+.1 + 0.2,1-.1) rectangle (2.5-.1 - 0.2,-.5+.1);
            %G
            \draw[fill=black!25,draw=none,opacity=0.5] (1.5+.1 + 0.2, -.5-.1) rectangle (2.5-.1 - 0.2,-2.5+.1);
            %A,B,C,D,E
            \draw[fill=black!15,draw=none,opacity=0.5] (-.5 ,1) rectangle (1.5,-2.5);
            %A,D,B,E
            \draw[fill=black!25,draw=none,opacity=0.5] (-.5+.1 ,1-.1) rectangle (1.5 - 0.1,-1.5+.1);
            %C
            \draw[fill=black!25,draw=none,opacity=0.5] (-.5+.1 ,-1.5-.1) rectangle (1.5 -.1,-2.5+.1);
            % A,B
            \draw[fill=black!45,draw=none,opacity=0.5] (-.5+.1+.1 ,1-.1-.1) rectangle (.7-.1 - 0.2,-1.5+.1+.1);
            % D,E
            \draw[fill=black!45,draw=none,opacity=0.5] (0.4+.1+.1 ,1-.1-.1) rectangle (1.5-.2 ,-1.5+.1+.1);
            
            \node[] (A) {A};
            \node[below of=A] (B) {B};
            \node[below of=B] (C) {C};
            \node[right of= A] (D) {D};
            \node[right of= B] (E) {E};
            
            %\draw[->] (A) -- (D);
            %\draw[->] (A) -- (E);
            %\draw[->] (B) -- (D);
            %\draw[->] (B) -- (E);
            
            \node[right of= D] (F) {F};
            \node[right of= E] (G) {A};        
            
            %\draw[->] (D) -- (F);
            %\draw[->] (E) -- (G);
            %\draw[->] (D) -- (G);
            %\draw[->] (E) -- (F);
            %\draw[->, bend right=15] (C) to (F);
            %\draw[->] (C) -- (G);
            \node[right of= F] (H) {G};     
            
            %\draw[->] (F) -- (H);
            %\draw[->] (G) -- (H);
            
            %\draw[->, bend left, dotted] (A) to (F);
            %\draw[->, bend left=40, dotted] (A) to (H);
            %\draw[->, bend left, dotted] (D) to (H);
            
            %\draw[->,dotted] (E) to (H);
            
            %\draw[->, bend right=20, dotted] (B) to (G);
            %\draw[->, bend right=55, dotted] (B) to (H);
            
            %\draw[->, bend right=40, dotted] (C) to (H);
            
            %\draw[ultra thick] (1.5,1) to (1.5,-2.5);
            %\draw[ultra thick] (2.5,1) to (2.5,-2.5);
            \draw[gray] (-.5,1) to (3.5-.2,1) to (3.5-.2,-2.5) to (-.5,-2.5) to (-.5,1);
            %\draw[thick] (-0.3,-1.5) to (1.5,-1.5);
            %\draw[gray] (.5,1) to (.5,-1.5);
            %\draw[thick] (1.5,-0.5) to (2.5,-0.5);
        \end{tikzpicture}
        \caption{Induced interval orders after applying the cut}
        \label{fig:example_cut_detection_h}
    \end{subfigure}
    \hfill
    \begin{subfigure}[t]{0.32\textwidth}
        %\resizebox{\textwidth}{!}{%
        \centering
        \begin{tikzpicture}
          \tikzset{
            arrow/.style={
              draw=lightgray,
              minimum height=.7cm,
              inner sep=.3em,
              shape=signal,
              signal from=west,
              signal to=east,
              signal pointer angle=150,
            }
          }
          %\node[arrow] {first};
          \node[arrow,align=center,fill=black!5] (1) 
            {\begin{tikzpicture}
                \node[arrow, fill=black!15] (AB) 
                    {A\\B};
                \node[arrow, right=.1cm of AB,fill=black!15] (DE) 
                    {D\\E};
                \coordinate (CENTER) at ($(AB)!0.5!(DE)$);
                \node[arrow,minimum width=1.3cm,below=.6cm of CENTER,fill=green1] (C) {C};
            \end{tikzpicture}};
          
          \node[arrow,align=center,right=.1cm of 1,fill=black!5] (2)   
            {\begin{tikzpicture}
                \node[arrow,fill=green2] (F) {F};
                \node[arrow, below=.1cm of F,fill=red,text=white] (G) {A};
            \end{tikzpicture}};
          
          \node[arrow,align=center,right=.1cm of 2,fill=blue2,text=white] (3) {G};
        \end{tikzpicture}
        %}
        \caption{Intermediate visualization (not shown to the user)}
        \label{fig:example_cut_detection_i}
    \end{subfigure}
    \hfill
    \begin{subfigure}[t]{0.32\textwidth}
        \centering
        \begin{tikzpicture}
            % H
            \draw[fill=black!15,draw=none,opacity=0.5] (2.5 + 0.2,1) rectangle (3.5 - 0.2,-2.5);
            
            % F,G
            \draw[fill=black!15,draw=none,opacity=0.5] (1.5 + 0.2,1) rectangle (2.5 - 0.2,-2.5);
            %F
            \draw[fill=black!25,draw=none,opacity=0.5] (1.5+.1 + 0.2,1-.1) rectangle (2.5-.1 - 0.2,-.5+.1);
            %G
            \draw[fill=black!25,draw=none,opacity=0.5] (1.5+.1 + 0.2, -.5-.1) rectangle (2.5-.1 - 0.2,-2.5+.1);
            %A,B,C,D,E
            \draw[fill=black!15,draw=none,opacity=0.5] (-.5 ,1) rectangle (1.5,-2.5);
            %A,D,B,E
            \draw[fill=black!25,draw=none,opacity=0.5] (-.5+.1 ,1-.1) rectangle (1.5 - 0.1,-1.5+.1);
            %C
            \draw[fill=black!25,draw=none,opacity=0.5] (-.5+.1 ,-1.5-.1) rectangle (1.5 -.1,-2.5+.1);
            % A,B
            \draw[fill=black!45,draw=none,opacity=0.5] (-.5+.1+.1 ,1-.1-.1) rectangle (.7-.1 - 0.2,-1.5+.1+.1);
            % D,E
            \draw[fill=black!45,draw=none,opacity=0.5] (0.4+.1+.1 ,1-.1-.1) rectangle (1.5-.2 ,-1.5+.1+.1);
        
            \node[] (A) {A};
            \node[below of=A] (B) {B};
            \node[below of=B] (C) {C};
            \node[right of= A] (D) {D};
            \node[right of= B] (E) {E};
            
            %\draw[->] (A) -- (D);
            %\draw[->] (A) -- (E);
            %\draw[->] (B) -- (D);
            %\draw[->] (B) -- (E);
            
            \node[right of= D] (F) {F};
            \node[right of= E] (G) {A};        
            
            %\draw[->] (D) -- (F);
            %\draw[->] (E) -- (G);
            %\draw[->] (D) -- (G);
            %\draw[->] (E) -- (F);
            %\draw[->, bend right=15] (C) to (F);
            %\draw[->] (C) -- (G);
            \node[right of= F] (H) {G};     
            
            %\draw[->] (F) -- (H);
            %\draw[->] (G) -- (H);
            
            %\draw[->, bend left, dotted] (A) to (F);
            %\draw[->, bend left=40, dotted] (A) to (H);
            %\draw[->, bend left, dotted] (D) to (H);
            
            %\draw[->,dotted] (E) to (H);
            
            %\draw[->, bend right=20, dotted] (B) to (G);
            %\draw[->, bend right=55, dotted] (B) to (H);
            
            %\draw[->, bend right=40, dotted] (C) to (H);
            
            %\draw[ultra thick] (1.5,1) to (1.5,-2.5);
            %\draw[ultra thick] (2.5,1) to (2.5,-2.5);
            \draw[gray] (-.5,1) to (3.5-.2,1) to (3.5-.2,-2.5) to (-.5,-2.5) to (-.5,1);
            %\draw[thick] (-0.3,-1.5) to (1.5,-1.5);
            %\draw[] (.5,1) to (.5,-1.5);
            %\draw[thick] (1.5,-0.5) to (2.5,-0.5);
            \draw[red, dashed] (-.5+.1+.1,-0.5) to (.7- 0.3,-0.5);
            \draw[red, dashed] (0.5+.1,-0.5) to (1.5-.2 ,-0.5);
        \end{tikzpicture}
        \caption{Maximal parallel cuts}
        \label{fig:example_cut_detection_j}
    \end{subfigure}
    \hfill
    \begin{subfigure}[t]{0.32\textwidth}
        \centering
        \begin{tikzpicture}
            % H
            \draw[fill=black!15,draw=none,opacity=0.5] (2.5 + 0.2,1) rectangle (3.5 - 0.2,-2.5);
            
            % F,G
            \draw[fill=black!15,draw=none,opacity=0.5] (1.5 + 0.2,1) rectangle (2.5 - 0.2,-2.5);
            %F
            \draw[fill=black!25,draw=none,opacity=0.5] (1.5+.1 + 0.2,1-.1) rectangle (2.5-.1 - 0.2,-.5+.1);
            %G
            \draw[fill=black!25,draw=none,opacity=0.5] (1.5+.1 + 0.2, -.5-.1) rectangle (2.5-.1 - 0.2,-2.5+.1);
            %A,B,C,D,E
            \draw[fill=black!15,draw=none,opacity=0.5] (-.5 ,1) rectangle (1.5,-2.5);
            %A,D,B,E
            \draw[fill=black!25,draw=none,opacity=0.5] (-.5+.1 ,1-.1) rectangle (1.5 - 0.1,-1.5+.1);
            %C
            \draw[fill=black!25,draw=none,opacity=0.5] (-.5+.1 ,-1.5-.1) rectangle (1.5 -.1,-2.5+.1);
            % A,B
            \draw[fill=black!45,draw=none,opacity=0.5] (-.5+.1+.1 ,1-.1-.1) rectangle (.7-.1 - 0.2,-1.5+.1+.1);
            % D,E
            \draw[fill=black!45,draw=none,opacity=0.5] (0.4+.1+.1 ,1-.1-.1) rectangle (1.5-.2 ,-1.5+.1+.1);
            
            \node[fill=black!50,draw=none] (A) {A};
            \node[below of=A,fill=black!50,draw=none] (B) {B};
            \node[below of=B] (C) {C};
            \node[right of= A,fill=black!50,draw=none] (D) {D};
            \node[right of= B,fill=black!50,draw=none] (E) {E};
            
            %\draw[->] (A) -- (D);
            %\draw[->] (A) -- (E);
            %\draw[->] (B) -- (D);
            %\draw[->] (B) -- (E);
            
            \node[right of= D] (F) {F};
            \node[right of= E] (G) {A};        
            
            %\draw[->] (D) -- (F);
            %\draw[->] (E) -- (G);
            %\draw[->] (D) -- (G);
            %\draw[->] (E) -- (F);
            %\draw[->, bend right=15] (C) to (F);
            %\draw[->] (C) -- (G);
            \node[right of= F] (H) {G};     
            
            %\draw[->] (F) -- (H);
            %\draw[->] (G) -- (H);
            
            %\draw[->, bend left, dotted] (A) to (F);
            %\draw[->, bend left=40, dotted] (A) to (H);
            %\draw[->, bend left, dotted] (D) to (H);
            
            %\draw[->,dotted] (E) to (H);
            
            %\draw[->, bend right=20, dotted] (B) to (G);
            %\draw[->, bend right=55, dotted] (B) to (H);
            
            %\draw[->, bend right=40, dotted] (C) to (H);
            
            %\draw[ultra thick] (1.5,1) to (1.5,-2.5);
            %\draw[ultra thick] (2.5,1) to (2.5,-2.5);
            \draw[gray] (-.5,1) to (3.5-.2,1) to (3.5-.2,-2.5) to (-.5,-2.5) to (-.5,1);
            %\draw[thick] (-0.3,-1.5) to (1.5,-1.5);
            %\draw[] (.5,1) to (.5,-1.5);
            %\draw[thick] (1.5,-0.5) to (2.5,-0.5);
            %\draw[gray,very thin] (-.5,-0.5) to (1.5,-0.5);
        \end{tikzpicture}
        \caption{Induced interval orders after applying the cuts}
        \label{fig:example_cut_detection_k}
    \end{subfigure}
    \hfill
    \begin{subfigure}[t]{0.32\textwidth}
        \resizebox{\textwidth}{!}{%
        \centering
        \begin{tikzpicture}
          \tikzset{
            arrow/.style={
              draw=lightgray,
              minimum height=.7cm,
              inner sep=.3em,
              shape=signal,
              signal from=west,
              signal to=east,
              signal pointer angle=150,
            }
          }
          %\node[arrow] {first};
          \node[arrow,align=center,fill=black!5] (1) 
            {\begin{tikzpicture}
                \node[arrow, fill=black!15] (AB) 
                    {\begin{tikzpicture}
                        \node[arrow,fill=red,text=white] (A) {A};
                        \node[arrow,below=.1cm of A,fill=orange1] (B) {B};
                    \end{tikzpicture}};
                \node[arrow, right=.1cm of AB,fill=black!15] (DE) 
                    {\begin{tikzpicture}
                        \node[arrow,fill=blue1,text=white] (D) {D};
                        \node[arrow,below=.1cm of D,fill=yellow] (E) {E};
                    \end{tikzpicture}};
                \coordinate (CENTER) at ($(AB)!0.5!(DE)$);
                \node[arrow,minimum width=2.5cm,xshift=-.1cm,below=1.1cm of CENTER,fill=green1] (C) {C};
            \end{tikzpicture}};
          
          \node[arrow,align=center,right=.1cm of 1,fill=black!5] (2)   
            {\begin{tikzpicture}
                \node[arrow,fill=green2] (F) {F};
                \node[arrow, below=.1cm of F,fill=red,text=white] (G) {A};
            \end{tikzpicture}};
          
          \node[arrow,align=center,right=.1cm of 2,fill=blue2,text=white] (3) {G};
        \end{tikzpicture}
        }
        \caption{Final visualization}
        \label{fig:example_cut_detection_l}
    \end{subfigure}
    \caption{Example of recursively applying ordering and parallel cuts to an interval order and the corresponding visualization}
    \label{fig:example_cut_detection}
\end{figure}

\autoref{fig:example_cut_detection} shows an example of the proposed visualization approach. 
We use the interval order from \autoref{fig:partial-order} as input.
The visualization approach recursively looks for a maximal ordering or parallel cut.
In the example, we initially find an ordering cut of size three (cf. \autoref{fig:example_cut_detection_a}).
Given the cut, we create three induced interval orders (cf. \autoref{fig:example_cut_detection_b}).
As stated before, each induced interval order created by a cut represents a chevron.
In general, an ordering cut indicates the horizontal alignment of chevrons while a parallel cut indicates the vertical alignment of chevrons.
Since we found an ordering cut of size three, the intermediate visualization consists of three horizontally-aligned chevrons (cf. \autoref{fig:example_cut_detection_c}).
If an induced interval order only consists of one element (e.g., the third induced interval order in \autoref{fig:example_cut_detection_b}), we fill the corresponding chevron with a color that is unique for the given activity label (cf. \autoref{fig:example_cut_detection_c}).
As in the classic trace variant explorer, colors are used to better distinguish different activity labels.

We now recursively apply cuts to the induced interval orders. % until no more cuts can be applied.
In the first two interval orders, we apply a parallel cut (cf. \autoref{fig:example_cut_detection_d}). 
The third interval order consists only of one node labeled with $G$; thus, no further cuts can be applied.
\autoref{fig:example_cut_detection_e} shows the induced interval orders after applying the two parallel cuts.
As stated before, time-overlapping activity instances are indicated by stacked chevrons. 
Since both applied parallel cuts have size two, we create two stacked chevrons each within the first and the second chevron (cf. \autoref{fig:example_cut_detection_f}). 
After another ordering cut (cf. \autoref{fig:example_cut_detection_g}-\ref{fig:example_cut_detection_i}) and two more parallel cuts (cf. \autoref{fig:example_cut_detection_j}), the visualization approach stops because all induced interval orders consist of only one activity instance.
\autoref{fig:example_cut_detection_l} shows the final visualization.

\subsection{Formal Guarantees}
\label{sec:guarantees}
Next, we show that the proposed approach is deterministic, i.e., the same visualization is always returned for the same interval order.
We therefore show that different cuts cannot coexist, i.e., either a parallel cut, an ordering cut, or no cut exists in an interval order.
Further, we show that maximal cuts are unique.

\begin{lemma}[Cuts Cannot Coexist]
\label{lem:coexistence}
In an interval order $(V,E,\lambda){\in}\mathcal{P}$ a parallel and an ordering cut cannot coexist.
\end{lemma}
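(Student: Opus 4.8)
The plan is to argue by contradiction via a simple connectivity argument on the underlying undirected graph of the interval order. The guiding observation is that an ordering cut and a (nontrivial) parallel cut impose opposite global structure: the existence of an ordering cut forces the graph to be (weakly) connected, whereas a parallel cut with at least two blocks witnesses that the graph is disconnected. Since a graph cannot be both connected and disconnected, the two cuts cannot coexist.

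Concretely, I would first assume for contradiction that both an ordering cut $V_1,\dots,V_m$ (with $m{>}1$, by \autoref{def:ordering_cut}) and a parallel cut $W_1,\dots,W_k$ (with $k{>}1$) exist on the same $(V,E,\lambda)$. The first step is to show that the ordering cut makes the underlying undirected graph connected. Take any two vertices $u,w{\in}V$. If they lie in different ordering blocks, say $u{\in}V_i$ and $w{\in}V_j$ with $i{<}j$, then $(u,w){\in}E$ by \autoref{def:ordering_cut}, so they are adjacent. If instead $u,w{\in}V_i$ lie in the same block, then because $m{>}1$ there is another block $V_j$ with $j{\neq}i$; picking any $x{\in}V_j$ (nonempty), the ordering condition yields an edge between $u$ and $x$ and an edge between $w$ and $x$ (in whichever direction the indices $i,j$ dictate), so $u\,\text{--}\,x\,\text{--}\,w$ is an undirected path. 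Hence every pair of vertices is connected, i.e., the graph is connected.

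The second step exploits the parallel cut to derive the contradiction. Since $k{>}1$, pick $u{\in}W_1$ and $w{\in}W_2$. By \autoref{def:parallel_cut}, there is no edge in either direction between any vertex of $W_1$ and any vertex of $W_2$; more generally, no edge of $E$ crosses the parallel partition, so every undirected path stays inside a single block. Therefore $u$ and $w$ lie in distinct, mutually unreachable blocks, contradicting the connectivity established in the first step. This completes the contradiction and proves the claim.

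The only delicate point is the treatment of the parameter $n$ in the parallel-cut definition: a trivial parallel cut with $n{=}1$ (the whole vertex set as a single component) always exists and would vacuously coexist with an ordering cut, so the statement is only meaningful for nontrivial parallel cuts ($k{\geq}2$), which are exactly the ones detected by the recursive procedure. I expect the main (though still mild) obstacle to be stating the connectivity argument at the right level of rigor---in particular, being explicit that the parallel-cut condition forbids \emph{paths} and not merely direct edges across blocks, since it is the absence of crossing paths that actually yields disconnectedness.
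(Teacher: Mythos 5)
Your proof is correct and follows essentially the same route as the paper's: the paper also argues by contradiction that the edges guaranteed by the ordering cut force every vertex into a single block of the parallel cut (i.e., the graph is connected), so all remaining parallel blocks would be empty. Your packaging of this as ``ordering cut $\Rightarrow$ connected underlying graph, nontrivial parallel cut $\Rightarrow$ disconnected'' is a slightly cleaner presentation of the same idea, and your caveat about excluding the trivial one-block parallel cut matches the paper's implicit assumption that the parallel cut has at least two blocks.
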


\begin{proof}
Let $(V,E,\lambda) {\in} \mathcal{P}$ be an interval order with an ordering cut $V_1,\dots,V_n$ for some $n{\geq}2$.
Assume there exists a parallel cut, too, i.e., $V_1',\dots,V_m'$ for some $m{\geq}2$.
For $1{\leq}j{\leq}m$, assume that for an arbitrary $v{\in}V$ it holds that $v{\in}V_j'$ such that $v{\in}V_i$ for some $i{\in}\{1,\dots,n\}$.
Since an ordering cut exists, we know that $\forall w{\in}V_{i+1}{\cup}\dots{\cup}V_n\big((v,w){\in}E\big)$ and $\forall w'{\in}V_{1}{\cup}\dots{\cup}V_{i-1}\big(\allowbreak(w',v){\in} E\big)$.
Since $V_1',\dots,V_m'$ is a parallel cut, i.e., each $V_k'{\in}\{V_1',\dots,V_m'\}$ represents a connected component (\autoref{def:parallel_cut}), also all $w$ and $w'$ must be in $V_j'$.
Hence, $V_j'{=}\{v\}{\cup}V_{1}{\cup}\dots{\cup}V_{i-1}{\cup}\allowbreak V_{i+1}{\cup}\allowbreak\dots{\cup}V_n$.
Further, since $\forall w'{\in}V_{1}{\cup}\dots{\cup}V_{i-1}\forall w{\in}V_{i+1}{\cup}\dots{\cup}V_n\big((w',w){\in}E \land (w',v){\in}E \land (v,w){\in}E\big)$ it follows by \autoref{def:parallel_cut} that $V_j'{=}V_1{\cup}\dots{\cup}V_n{=}V$. Hence, $\forall V_k'{\in}\{V_1',\dots,V_m'\} {\setminus} \{V_j'\} \allowbreak (V_k'{=}\emptyset)$ since $V_1',\dots,V_m'$ is a partition of $V$. 
This contradicts our assumption that there exists a parallel cut, too. 
The other direction is symmetrical.$\qed$
\end{proof}

Since cuts cannot coexist (cf. \autoref{lem:coexistence}), one cut is applicable for a given interval order at most.
Next, we show that maximal cuts are unique. 

\begin{lemma}[Maximal Ordering Cuts Are Unique]
\label{lem:ordering_unique}
If an ordering cut exists in a given interval order $(V,E,\lambda){\in}\mathcal{P}$, the maximal ordering cut is unique. %, i.e., there is a unique partition of $V$.
\end{lemma}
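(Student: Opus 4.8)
The plan is to identify a canonical partition of $V$ that every ordering cut must be assembled from, and then to show this canonical partition is itself an ordering cut; maximality will force any maximal cut to coincide with it. Concretely, I would first build the \emph{incomparability graph} $G$ on the vertex set $V$, placing an undirected edge between two distinct vertices $u,v$ exactly when they are incomparable (neither $(u,v){\in}E$ nor $(v,u){\in}E$). Let $C_1,\dots,C_k$ denote the connected components of $G$. The whole argument reduces to comparing arbitrary ordering cuts against $\{C_1,\dots,C_k\}$.

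The first step is to show that every ordering cut is coarser than $\{C_1,\dots,C_k\}$. The observation is that two incomparable vertices can never be separated by an ordering cut: if they lay in different blocks $V_i$ and $V_j$ with $i{<}j$, \autoref{def:ordering_cut} would force an edge from the $V_i$-vertex to the $V_j$-vertex, contradicting incomparability. Since \emph{lying in the same block} is an equivalence relation, it must contain the transitive closure of incomparability, so each component $C_\ell$ is contained in a single block. Consequently every ordering cut has at most $k$ blocks.

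The second, and main, step is to show that $\{C_1,\dots,C_k\}$, suitably ordered, is itself an ordering cut (with $k{\geq}2$, which holds because an ordering cut exists by hypothesis). I would fix two distinct components $C,C'$, pick representatives $u{\in}C$, $u'{\in}C'$ --- necessarily comparable, say $u{<}u'$ --- and propagate this relation along paths in $G$. The propagation lemma is: if $u{<}u'$ and $v$ is incomparable to $u$ (an edge of $G$ inside $C$), then $v{<}u'$, since otherwise $u'{<}v$ together with $u{<}u'$ would give $u{<}v$ by transitivity, contradicting incomparability of $u,v$. Iterating along a path from $u$ to any $v{\in}C$ shows all of $C$ precedes $u'$; a symmetric propagation along $C'$ then shows all of $C$ precedes all of $C'$, using only transitivity and asymmetry of the strict partial order underlying the interval order. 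It then remains to verify that this \emph{complete domination} between components is a strict total order (totality from comparability of cross-component vertices, asymmetry from asymmetry of $<$, transitivity from transitivity of $<$), so the components arrange linearly as $C_{(1)},\dots,C_{(k)}$ into a valid ordering cut.

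Combining the two steps finishes the proof: the component cut has exactly $k$ blocks, which by the first step is the maximum possible, so it is a maximal ordering cut. Conversely, any maximal ordering cut has $k$ blocks; since each of its blocks is a union of components and each component sits in exactly one block, having $k$ blocks forces a one-to-one correspondence, i.e.\ each block is a single $C_\ell$, with the block order pinned down by domination. Hence every maximal ordering cut equals the canonical component cut, giving uniqueness. I expect the propagation argument of the second step --- establishing complete, consistently directed domination between components and checking it totally orders them --- to be the main obstacle, while the remaining steps are essentially bookkeeping.
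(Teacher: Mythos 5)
Your proof is correct, but it takes a genuinely different route from the paper. The paper argues directly by contradiction: it assumes two distinct maximal ordering cuts $V_1,\dots,V_n$ and $V_1',\dots,V_n'$ (necessarily of the same size), locates a vertex $v$ that sits in block $V_i$ of one cut but in some block $V_m'$ with $m\neq i$ of the other, and then uses the ordering-cut condition of \autoref{def:ordering_cut} to conclude $(v,v)\in E$, contradicting the irreflexivity of an interval order --- a short argument that never exhibits what the maximal cut actually is. You instead give a \emph{canonical-form} argument: you construct the connected components of the incomparability graph, prove via the propagation lemma (transitivity pushes the relation $u<u'$ across an incomparability edge) that these components are completely and consistently ordered by domination, and then show that every ordering cut is a coarsening of this component partition, so the maximal cut must equal it. Your approach is longer but buys more: it simultaneously establishes existence, uniqueness, and an explicit construction (hence an algorithm) for the maximal ordering cut, it makes transparent why the bound on the number of blocks is what it is, and it exposes a pleasant duality with \autoref{lem:parallel_unique}, where the maximal parallel cut is given by the components of the comparability structure rather than the incomparability graph. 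It also makes explicit that only the strict-partial-order axioms are used, not the interval-order condition --- something the paper's terser argument leaves implicit. The main technical content you identify, the directed domination between components and its total ordering, is exactly the step that replaces the paper's appeal to irreflexivity.
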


\begin{proof}
Proof by contradiction. Assume an interval order $(V,E,\lambda) {\in} \mathcal{P}$ having two \emph{different} maximal ordering cuts, i.e., $V_1,\dots,V_n$ and $V'_1,\dots,V'_n$.

\noindent $\Rightarrow \exists i {\in} \{1,\dots,n\} \forall j {\in} \{1,\dots,n\} \big(  V_i {\neq} V_j' \big) \Rightarrow V_i{\neq} V_i'$

\noindent $\Rightarrow \exists v {\in} V_i {\cup} V_i' \big( (v{\in}V_i \land v{\notin}V'_i) \lor (v{\notin}V_i \land v{\in}V'_i) \big)$

\noindent Assume $v {\in} V_i \land v {\notin} V_i'$ (the other case is symmetric)

\noindent $\Rightarrow v{\in} V_1' {\cup} \dots {\cup} V_{i-1}' {\cup} V_{i+1}' {\cup} \dots {\cup} V_n' {=} V{\setminus}V_i'$

\noindent \begin{minipage}[t]{0.49\textwidth}
\noindent 1) Assume $v {\in}  V_{1}' {\cup} \dots {\cup} V_{i-1}'$

\noindent $\xRightarrow[]{\autoref{def:ordering_cut}} \forall v' {\in} V_i \big( (v,v') {\in} E \big)$

\end{minipage}
\hfill
\begin{minipage}[t]{0.49\textwidth}
\noindent 2) Assume $v {\in}  V_{i+1}' {\cup} \dots {\cup} V_n'$

\noindent $\xRightarrow[]{\autoref{def:ordering_cut}} \forall v' {\in} V_i \big( (v',v) {\in} E \big)$

\end{minipage}

\noindent $\xRightarrow[]{v{\in}V_i} (v,v) {\in} E$ contradicts the assumption $(V,E,\lambda)$ represents an interval order because irreflexivity is not satisfied. $\qed$

\end{proof}

\begin{lemma}[Maximal Parallel Cuts Are Unique]
\label{lem:parallel_unique}
If a parallel cut exists in a given interval order $(V,E,\lambda){\in}\mathcal{P}$, the maximal parallel cut is unique. %, i.e., there is a unique partition of $V$.
\end{lemma}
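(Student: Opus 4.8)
The plan is to reduce the claim to the standard graph-theoretic fact that the (weakly) connected components of a graph form a unique partition of its vertex set. First I would observe that the condition in \autoref{def:parallel_cut}---namely that no edge runs between two distinct parts in either direction---is precisely the statement that each part $V_i$ is a union of connected components of the underlying undirected graph of $(V,E,\lambda)$. Indeed, if some connected component were split across two parts $V_i$ and $V_j$, then by definition of connectivity there would be a path joining a vertex of $V_i$ to a vertex of $V_j$, and hence at least one edge crossing between distinct parts, contradicting $(v,v'){\notin}E \land (v',v){\notin}E$ for all $v{\in}V_i,v'{\in}V_j$.

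Next I would argue that maximality of $n$ forces each part to be a single connected component. Since every part of a parallel cut is a union of connected components, any part containing two or more distinct components could be split further into a strictly finer partition that still satisfies \autoref{def:parallel_cut}, yielding a parallel cut with more than $n$ parts and contradicting maximality. Conversely, the partition of $V$ into its individual connected components is itself a valid parallel cut, since by construction there are no edges between distinct components. Hence a maximal parallel cut is attained exactly when each part is a single connected component.

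Finally, I would invoke the uniqueness of the connected-component decomposition: every vertex of $(V,E,\lambda)$ belongs to exactly one connected component, so the partition of $V$ into connected components is uniquely determined by the graph. Combining this with the previous step, the maximal parallel cut coincides with this unique decomposition, and is therefore unique.

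The argument is essentially routine once connectivity is invoked, so unlike \autoref{lem:ordering_unique} I do not expect a genuine obstacle here. The only point requiring care is the equivalence between \emph{no crossing edges between parts} and \emph{each part is a union of connected components}; this rests on treating the directed edges of the interval order as undirected for the purpose of connectivity, which is justified by the symmetric form $(v,v'){\notin}E \land (v',v){\notin}E$ appearing in \autoref{def:parallel_cut}.
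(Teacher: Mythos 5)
Your proposal is correct and follows essentially the same route as the paper, whose entire proof is the one-line observation that the connected components of a graph are unique; you simply fill in the details (parts of a parallel cut are unions of components, maximality forces each part to be a single component, and the component decomposition is unique). No gap here.
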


\begin{proof}[\autoref{lem:parallel_unique}]
By definition, components of a graph are unique. $\qed$
\end{proof}

\autoref{lem:ordering_unique} and \autoref{lem:parallel_unique} show that maximal cuts, both ordering and parallel, are unique. 
Together with \autoref{lem:coexistence}, we derive that the proposed visualization approach is \emph{deterministic}, i.e., the approach always returns the same visualization for the same input, because for a given interval order only one cut type is applicable at most and if a cut exists, the maximal cut is unique.

\subsection{Limitations}
\label{sec:limitations}

In this section, we discuss the limitations of the proposed visualization approach. 

Reconsider the example in \autoref{fig:example_cut_detection}.
Cuts are recursively applied until one node, i.e., an activity instance, remains in each induced interval order (cf. \autoref{fig:example_cut_detection_k}).
However, there are certain cases in which the proposed approach cannot apply cuts although more than one node exists in an (induced) interval order.
%In these cases, we are not able to visualize all

\vspace{-.4cm}

\begin{figure}[]
    \centering
    \begin{subfigure}[b]{0.54\textwidth}
    \centering
    \resizebox{\textwidth}{!}{%
    \begin{tikzpicture}
        \tikzstyle{labels}=[font=\footnotesize,color=gray]
        %\node[align=center](text) at (4.5,3.1) {\emph{Case/Process Instance 2}};
        %\node[align=center](text2) at (4.5,3.5) {};
        
        \draw[dashed,lightgray] (0,.5) to (0,3);
        %\node[labels] at (0,.3) {08:00};
        \draw[dashed,lightgray] (1,.5) to (1,3);
        %\node[labels] at (1,.3) {09:00};
        \draw[dashed,lightgray] (2,.5) to (2,3);
        %\node[labels] at (2,.3) {10:00};
        \draw[dashed,lightgray] (3,.5) to (3,3);
        %\node[labels] at (3,.3) {11:00};
        \draw[dashed,lightgray] (4,.5) to (4,3);
        %\node[labels] at (4,.3) {12:00};
        \draw[dashed,lightgray] (5,.5) to (5,3);
        %\node[labels] at (5,.3) {13:00};
        \draw[dashed,lightgray] (6,.5) to (6,3);
        %\node[labels] at (6,.3) {14:00};
        \draw[dashed,lightgray] (7,.5) to (7,3);
        %\node[labels] at (7,.3) {15:00};
        \draw[dashed,lightgray] (8,.5) to (8,3);
        %\node[labels] at (8,.3) {16:00};
        %\draw[dashed,lightgray] (9,.5) to (9,3);
        %\node[labels] at (9,.3) {17:00};
        
        \node[circle,fill=black,inner sep=0pt, minimum size=3pt] (A+S) at (0,1) {};
        \node[circle,fill=black,inner sep=0pt, minimum size=3pt] (A+C) at (1.5,1) {};
        \draw[-] (A+S) to node[above] {A} (A+C);
        
        \node[circle,fill=black,inner sep=0pt, minimum size=3pt] (B+S) at (1,1.5) {};
        \node[circle,fill=black,inner sep=0pt, minimum size=3pt] (B+C) at (2.5,1.5) {};
        \draw[-] (B+S) to node[above] {B} (B+C);
        
        \node[circle,fill=black,inner sep=0pt, minimum size=3pt] (C+S) at (2,1) {};
        \node[circle,fill=black,inner sep=0pt, minimum size=3pt] (C+C) at (3.5,1) {};
        \draw[-] (C+S) to node[above] {C} (C+C);
        
        \node[circle,fill=black,inner sep=0pt, minimum size=3pt] (D+S) at (3,1.5) {};
        \node[circle,fill=black,inner sep=0pt, minimum size=3pt] (D+C) at (4.5,1.5) {};
        \draw[-] (D+S) to node[above] {D} (D+C);
        
        \node[circle,fill=black,inner sep=0pt, minimum size=3pt] (E+S) at (4,1) {};
        \node[circle,fill=black,inner sep=0pt, minimum size=3pt] (E+C) at (5.5,1) {};
        \draw[-] (E+S) to node[above] {E} (E+C);
        
        \node[circle,fill=black,inner sep=0pt, minimum size=3pt] (F+S) at (5,1.5) {};
        \node[circle,fill=black,inner sep=0pt, minimum size=3pt] (F+C) at (6.5,1.5) {};
        \draw[-] (F+S) to node[above] {F} (F+C);
        
        \node[blue] at (.5,2.5) {$\vdots$};
        \node[blue] at (1.5,2.5) {$\vdots$};
        \node[blue] at (2.5,2.5) {$\vdots$};
        \node[blue] at (3.5,2.5) {$\vdots$};
        \node[blue] at (4.5,2.5) {$\vdots$};
        \node[blue] at (5.5,2.5) {$\vdots$};
        \node[blue] at (6.5,2.5) {$\vdots$};
        \node[blue] at (7.5,2.5) {\reflectbox{$\ddots$}};
        \node[blue] at (7.5,1.5) {$\cdots$};
         \node[blue] at (7.5,1) {$\cdots$};
        %\node[circle,fill=black,inner sep=0pt, minimum size=3pt] (G+S) at (5.8,2.1) {};
        %\node[circle,fill=black,inner sep=0pt, minimum size=3pt] (G+C) at (7,2.1) {};
        %\draw[-] (G+S) to node[above] {A} (G+C);
        
        %\node[circle,fill=black,inner sep=0pt, minimum size=3pt] (H+S) at (7.5,2.6) {};
        %\node[circle,fill=black,inner sep=0pt, minimum size=3pt] (H+C) at (9,2.6) {};
        %\draw[-] (H+S) to node[above] {G} (H+C);
        
        \draw[->] (0,.5) to (8,.5);
        \node[] (t) at (8,.7) {$t$};
    \end{tikzpicture}
    }
    \caption{ Dots indicate that the shown pattern of chained activity instances can be extended arbitrarily, horizontally as well as vertically}
    \label{fig:chained_activity_instances}
    \end{subfigure}
    \hfill
    \begin{subfigure}[b]{0.24\textwidth}
        \resizebox{\textwidth}{!}{%
        \begin{tikzpicture}
          \node[](A){A};
          \node[below=of A](B){B};
          \node[right=of A](C){C};
          \node[right=of C](E){E};
          \node[right=of B](D){D};
          \node[right=of D](F){F};
          \draw[->] (A) to (D);
          \draw[->] (A) to (C);
          \draw[->,bend left,dotted] (A) to (E);
          \draw[->,dotted] (A) to (F);
          \draw[->] (B) to (D);
          \draw[->] (B) to (E);
          \draw[->] (C) to (E);
          \draw[->] (C) to (F);
          \draw[->] (D) to (F);
          \draw[->,bend right,dotted] (B) to (F);
        \end{tikzpicture}
        }
        \caption{Corresponding interval order}
        \label{fig:partial_order_no_cuts}
    \end{subfigure}
    \hfill
    \begin{subfigure}[b]{0.19\textwidth}
        \centering
        \begin{tikzpicture}
          \tikzset{
            arrow/.style={
              draw=lightgray,
              minimum height=.4cm,
              inner sep=.2em,
              shape=signal,
              signal from=west,
              signal to=east,
              signal pointer angle=150,
            }
          }
          %\node[arrow] {first};
          \node[arrow,align=center,fill=black!5,text=black,inner sep=3pt,] (A) {A\\B\\C\\D\\E\\F};
          %\node[arrow, below=.1cm of A,fill=orange,text=black] (B) {B};
          %\node[arrow, below=.1cm of B,fill=green1,text=black] (C) {C};
          %\node[arrow, below=.1cm of C,fill=blue1,text=white] (D) {D};
          %\node[arrow, below=.1cm of D,fill=yellow] (E) {E};
          %\node[arrow, below=.1cm of E,fill=green2] (F) {F};

        \end{tikzpicture}
        \caption{Corresponding visualization}
        \label{fig:stacked_chevrons}
    \end{subfigure}
    \caption{Example trace and interval order in which no cuts are applicable}
\end{figure}
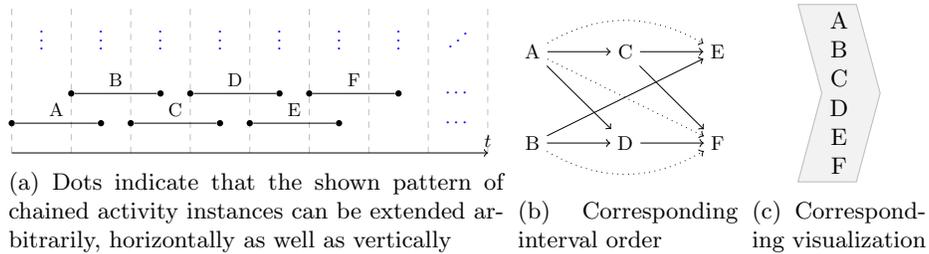

\vspace{-.4cm}

Consider \autoref{fig:chained_activity_instances}, showing an example of a trace where no cuts can be applied.
Since each activity instance is overlapping with some other activity instance, we cannot apply an ordering cut.
Also, since there is no activity instance that overlaps with all other activity instances, we cannot apply a parallel cut.
Note that the visualized pattern of chained activity instances can be arbitrarily extended by adding more activity instances vertically and horizontally, indicated by the dots in \autoref{fig:chained_activity_instances}.
\autoref{fig:partial_order_no_cuts} shows the corresponding interval order.

For the example trace, the proposed approach visualizes the activities $A,\dots,F$ within a single chevron, indicating that the activities are executed in an unspecified order (cf. \autoref{fig:stacked_chevrons}).
Thus, the visualization highly simplifies the observed process behavior in such cases.
Alternatively, it would be conceivable to show the interval order within a chevron if an (induced) interval order cannot be cut anymore.
However, we decided to keep the visualization simple and show all activities within a single chevron.
Note that this design decision entails that the expressiveness of the proposed visualization is lower than the graphical notation of interval orders, i.e., different interval orders can have the same visualization.
%In~\autoref{sec:evaluation} we will analyze this 
%Nevertheless, this limitation should not be seen as a disadvantage, since the purpose of trace visualization is to identify patterns in process execution.

\subsection{Implementation}
\label{sec:implementation}

\begin{figure}[tb]
    \centering
    \includegraphics[width=\textwidth]{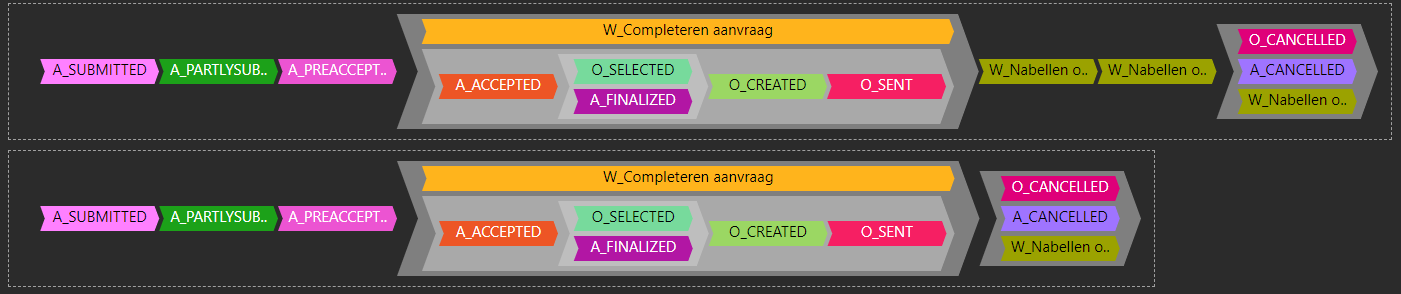}
    \caption{Screenshot of Cortado's variant explorer showing real-life event data~\cite{bpi12}}
    \label{fig:cortado}
\end{figure}

The proposed visualization approach for partially ordered event data has been implemented in \emph{Cortado}~\cite{cortado}\footnote{Available from version 1.2.0, downloadable from: \url{https://cortado.fit.fraunhofer.de/}}, which is a standalone tool for interactive process discovery.
\autoref{fig:cortado} shows a screenshot of Cortado visualizing an event log with partially ordered events.
The implemented trace variant explorer works for both, partially and totally ordered event data. 
The tool assumes an event log in the \texttt{.xes} format as input.
If the provided event log contains start- and complete-timestamps, the visualization approach presented in this paper is applied.

%The algorithms used by Cortado are written in Python programming language.
%We use the library NetworkX~\cite{hagberg2008exploring} for the computation of the visualizations, i.e., recursively applying cuts.
%The visualization in the GUI, we use web technologies. 

\section{Evaluation}
\label{sec:evaluation}
In this section, we evaluate the proposed visualization approach. 
We focus thereby on the performance aspects of the proposed visualization.
Further, we focus on the limitations, i.e., no cuts can be applied anymore, although the (induced) interval order has more than one element, as discussed in \autoref{sec:limitations}.

\begin{table}[t]
    \scriptsize
    \caption{Evaluation results based on real-life event logs}
    \label{tab:results}
    \centering
    \resizebox{\textwidth}{!}{%
    \begin{tabular}{@{} c c c c c c c c c c @{}}
        \toprule
        &\multicolumn{2}{c}{\textbf{Log statistics}}&\multicolumn{4}{c}{\makecell{\textbf{Calculation time (s) of}\\\textbf{interval ordered variants}}}&\multicolumn{2}{c}{\textbf{Variants statistics}}\\\cmidrule(lr){2-3}\cmidrule(lr){4-7}\cmidrule(lr){8-10}
        
        \makecell{Event\\log} &  
        \makecell{\#cases\\(avg. \#events\\per case)}  &   
        \makecell{multiple\\timestamps\\per activity\\available} & \makecell{Total\\calculation} & \makecell{Pre-\\processing\\event\\data} & \makecell{Creating\\interval\\orders} & \makecell{Cutting\\interval\\orders} 
        & \makecell{\#classic variants\\(only start-time-\\stamp considered)} 
        & \makecell{\#interval\\ordered\\variants}
        & \makecell{\#interval\\ordered\\variants\\with limitations}\\  \midrule
        
        \makecell[l]{BPI Ch.\\2017 \cite{bpi17}} & 
        \makecell{31,509\\(${\approx} 38$)} 
        & yes 
        & ${\approx}39$ 
        & ${\approx}21.6$  
        & ${\approx}5.7$ 
        & ${\approx}11.3$ 
        & 15,930 
        & 5,854
        & 335 (${\approx} 6\%$)\\ \\
        
        \makecell[l]{BPI Ch.\\2012 \cite{bpi12}} 
        & \makecell{13,087\\(${\approx} 20$)} 
        & yes
        & ${\approx}22.6$ 
        & ${\approx}5.9$ 
        & ${\approx}5.4$ 
        & ${\approx}11.2$ 
        & 4,366 
        & 3,830
        & 0 (${\approx} 0\%$)\\ \\
        
        \makecell[l]{Sepsis~\cite{sepsis}} 
        & \makecell{1,050\\(${\approx} 14$)} 
        & no
        & ${\approx}1.9$ 
        & ${\approx}0.3$ 
        & ${\approx}0.3$ 
        & ${\approx}1.2$ 
        & 846 
        & 690
        & 0 (${\approx} 0\%$) \\ 
        
        \bottomrule
    \end{tabular}
    }
\end{table}

We use publicly available, real-life event logs~\cite{bpi12,bpi17,sepsis}.
\autoref{tab:results} shows the results.
The first three columns show information about the logs.
Two logs~\cite{bpi17,bpi12} contain start- and complete-timestamps per activity instance while one log~\cite{sepsis} contains only a single timestamp per activity instance.
Regarding the total calculation time, we note that the duration of the visualization calculation is reasonable from a practical point of view.
%The pre-processing time includes filtering and the detection of activity instances, i.e., the mapping of start and complete events to a single activity instance.
We observe that the recursive application of cuts takes up most of the computation time in all logs, as expected.
Regarding the variants, we observe that the number of classic variants is higher compared to the number of variants derived from the interval order for all event logs.
We observe this even for the third event log~\cite{sepsis} because some activities within the cases share the same timestamp.
Regarding the limitations of the approach, as discussed in \autoref{sec:limitations}, we observe that only in the first log~\cite{bpi17} approximately in 6\% of all trace variants patterns occur where it was not possible to apply cuts anymore. 
Note that the limitation cannot occur in event logs where only a single timestamp per activity is available, e.g., \cite{sepsis}.

%Interestingly, in none of the used event logs we faced trace variants where the pattern of chained activity instances occurred, as discussed in \autoref{sec:limitations}.
%For each trace variant in all logs, we ended up with induced interval orders containing only a single element.

\section{Conclusion}
\label{sec:conclusion}
This paper introduced a novel visualization approach for partially ordered event data. 
Based on chevrons, known from the classic trace variant explorer, our approach visualizes the ordering relations between process instances in a hierarchical manner.
Our visualization allows to easily identify common patterns in trace variants from partially ordered event data.
The approach has been implemented in the tool \emph{Cortado} and has been evaluated on real-life event logs.
%For future work, we plan to investigate 

%
%
%
%
% ---- Bibliography ----
%
% BibTeX users should specify bibliography style 'splncs04'.
% References will then be sorted and formatted in the correct style.
%
\bibliographystyle{splncs04}
\bibliography{main}

\end{document}